\DeclareSymbolFont{newfont}{OML}{cmm}{m}{it}
\DeclareMathSymbol{\Varrho}{3}{newfont}{37}
\begin{document}

\title{A Nonparametric, Mixed Effect, Maximum Likelihood Estimator for the Distribution of Random Parameters in Discrete-Time Abstract Parabolic Systems with Application to the Transdermal Transport of Alcohol 
}

\titlerunning{Nonparametric Probability Distribution Estimation Using the Mixed Effects Model}        

\author{Lernik Asserian$^1$ \and Susan E. Luczak$^{2,4}$ \and I.G. Rosen$^{3,4}$ 
}

\authorrunning{Lernik Asserian et al.} 

\institute{L. Asserian \\
              \email{lernik@stanford.edu} \\
              S. E. Luczak \\
              \email{luczak@usc.edu} 
              \\
              I.G. Rosen \\
              \email{grosen@usc.edu} \\
            $^1$ Department of Mathematics, Stanford University, Stanford, CA, USA
            \\
              $^2$ Department of Psychology, University of Southern California, Los Angeles, CA, USA \\
              $^3$ Modeling and Simulation Laboratory, Department of Mathematics, University of Southern California, Los Angeles, CA, USA
            \\
              $^4$ This study was funded in part by the National Institute on Alcohol Abuse and Alcoholism under Grant Numbers: R21AA017711 and R01AA026368, S.E.L. and I.G.R.
         \\ }


\maketitle


\begin{abstract}
The existence and consistency of a maximum likelihood estimator for the joint probability distribution of random parameters in discrete-time abstract parabolic systems are established by taking a nonparametric approach in the context of a mixed effects statistical model using a Prohorov metric framework on a set of feasible measures. A theoretical convergence result for a finite dimensional approximation scheme for computing the maximum likelihood estimator is also established and the efficacy of the approach is demonstrated by applying the scheme to the transdermal transport of alcohol modeled by a random parabolic PDE. Numerical studies included show that the maximum likelihood estimator is statistically consistent in that the convergence of the estimated distribution to the ``true" distribution is observed in an example involving simulated data. The algorithm developed is then  applied to two datasets collected using two different transdermal alcohol biosensors. Using the leave-one-out cross-validation method, we get an estimate for the distribution of the random parameters based on a training set. The input from a test drinking episode is then used to quantify the uncertainty propagated from the random parameters to the output of the model in the form of a $95\%$ error band surrounding the estimated output signal.

\keywords{Nonparametric estimation \and Mixed effects model \and Maximum likelihood estimation \and Prohorov metric \and Existence and consistency \and Random discrete time dynamical systems \and Random partial differential equations \and finite dimensional approximation and convergence \and Alcohol biosensor \and Transdermal alcohol concentration}
\end{abstract}


\section{Introduction}
\label{intro}

In clinical therapy, medical research, and law enforcement, the breathalyzer, developed by Borkenstein based on a redox reaction and Henry's law \cite{Labianca:1990}, is used to measure Breath Alcohol Concentration (BrAC), a surrogate for Blood Alcohol Concentration (BAC). Clinicians and researchers consider it to be reasonably accurate to substitute BrAC for BAC and in general, this continues to be the case across different environmental conditions and across different individuals \cite{Labianca:1990}. Nevertheless, collecting near-continuous BrAC samples accurately (i.e. obtaining a deep lung sample that is not contaminated by any existing alcohol remaining in the mouth) is challenging and often impractical in the field.

Most of the ethanol, the type of alcohol in alcoholic beverages, that enters the human body, is metabolized by the liver into other products that are then excreted. In addition, a  portion of ingested ethanol exits the body directly through exhalation and urination \cite{Sakai:2006} and approximately $1\%$ diffuses through the epidermal layer of the skin in the form of perspiration and sweat. The amount of alcohol excreted in this manner is quantified in the form of transdermal alcohol concentration (TAC). TAC has been shown to be largely positively correlated with BrAC and BAC \cite{Swift:2000}. However, the precise relationship between TAC and BrAC/BAC is complicated due to a number of confounding physiological, technological, and environmental factors including, but not limited to, the skin's epidermal layer thickness, porosity and tortuosity, the process of vasodilation as observed through blood pressure and flow rate, the underlying technology of the particular sensor being used, and ambient temperature and humidity.

Currently, there are a number of different biosensors based on a variety of analog principles that can measure TAC essentially continuously, passively, unobtrusively, and relatively accurately, and make it available for processing in real time. Some of these devices are already commercially available and more are on the way. Several of these biosensors, like the breathalyzer, rely on relatively standard fuel-cell technology (i.e. converting chemical energy into electricity through redox reactions) to effectively count the number of ethanol molecules that evaporate during perspiration from the epidermal layer of the skin in near-continuous time \cite{Marques:2009}. Figure (\ref{fig:devices}) shows two of these TAC measuring devices; the WrisTAS$^{TM}$7, developed by Giner, Inc. in Waltham, MA and the SCRAM CAM$\textsuperscript{\textregistered}$ (Secure Continuous Remote Alcohol Monitor), developed by Alcohol Monitoring Systems, Inc. (AMS) in Littleton, Colorado.

\begin{figure}[H]
\centering
\includegraphics[width=3.9cm ,height= 3.2cm]{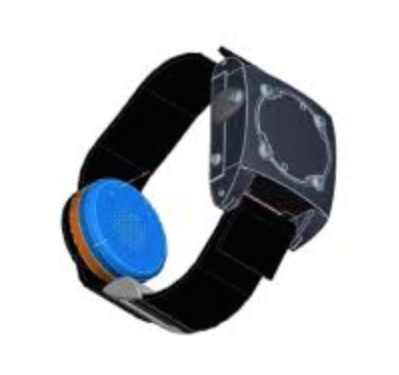}
\hspace{0.5 in}
\includegraphics[width=4.2cm ,height= 2.6cm]{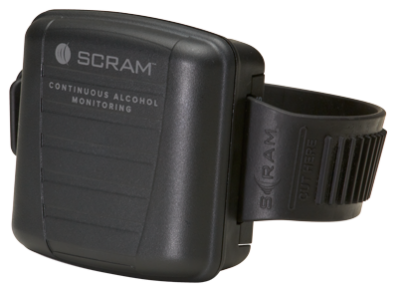}
\caption{WrisTAS$^{TM}$7 (left) and SCRAM CAM$\textsuperscript{\textregistered}$ (right) transdermal alcohol biosensors}
\label{fig:devices}
\end{figure}

Historically, researchers, clinicians, and the courts have always relied on BrAC or, when available, BAC. Consequently, in order to make TAC biosensors practical and accepted by the alcohol community, reliable and consistent means for converting TAC into equivalent BrAC or BAC must be developed. However, unfortunately, as indicated previously, a number of challenges must be dealt with before this can be done. In the past, our approach to developing a method for converting TAC into BrAC or BAC was based on deterministic methods for estimating parameters in distributed parameter systems such as those described in \cite{Banks:1997,Banks:1989}. Our earlier work along these lines has been reported in, for example, \cite{Dai:2016,Dumett:2008,Rosen:2014}. In these treatments, a forward model in the form of a one-dimensional diffusion equation based on Fick's law \cite{Smith:2004} with BrAC as the input and TAC as the output is first calibrated (i.e. fit) using BrAC and TAC data collected from the patient or research subject in the clinic or laboratory during what is known as a controlled alcohol challenge.  Then, after the same patient or research subject has worn the TAC sensor in the field for an extended period of time (e.g. days, weeks, or even months), the TAC data is downloaded, and the fit forward model is used to deconvolve the BrAC or BAC input from the observed TAC output. 

In order to eliminate the calibration process, we developed a population model-based approach wherein the parameters in the model were assumed to be random. Then, rather than fitting the values of the parameters themselves, their distributions were estimated based on BrAC and TAC data from a cohort of individuals (see, for example, \cite{Asserian:2021,Sirlanci:2019A,Sirlanci:2017,Sirlanci:2019B}). 

In all of our approaches to the TAC to BAC/BrAC conversion problem, the underlying model was taken to be based on the first principles physics based initial-boundary value problem for a parabolic partial differential equation.  This will also be the basic model to which we will direct our efforts in the present treatment. Transformed to be in terms of dimensionless variables, the model is given by
\begin{align}
    \frac{\partial x}{\partial t}(t,\eta) &= q_1 \frac{\partial^2 x}{\partial\eta^2}(t,\eta), \quad 0<\eta<1, \quad t>0, \label{eq.1}\\
    q_1\frac{\partial x}{\partial \eta}(t,0) &= x(t,0), \quad t>0, \label{eq.2}\\
    q_1\frac{\partial x}{\partial \eta}(t,1) &= q_2 u(t), \quad t>0, \label{eq.3}\\
    x(0,\eta) &= x_0, \quad 0<\eta<1, \label{eq.4}\\
    y(t) &= x(t,0), \quad t>0, \label{eq.5}
\end{align}
where $t$ and $\eta$ are the temporal and spatial variables, respectively, and $x(t,\eta)$ indicates the concentration of ethanol in the epidermal layer of the skin at time $t$ and depth $\eta$, where $\eta=0$ is at the skin surface and $\eta=1$ is at the boundary between the epidermal and dermal layers of the skin. The input to the system is $u(t)$, which is the BrAC/BAC at time $t$, and the output is $y(t)$, which is the TAC at time $t$. Equation (\ref{eq.1}) represents the transport of ethanol through the epidermal layer of the skin. The boundary conditions (\ref{eq.2}) and (\ref{eq.3}) represent respectively the evaporation of ethanol at the skin surface and the flux of ethanol across the boundary between the epidermal and dermal layers. It is assumed that there is no alcohol in the epidermal layer of the skin at time $t = 0$, so the initial condition (\ref{eq.4}) is $x_0(\eta)=0$, $0<\eta<1$. Finally, the output equation (\ref{eq.5}) represents the TAC level measured by the biosensor at the skin surface. 

The parameters in the system (\ref{eq.1})-(\ref{eq.5}) that will be assumed to be random are $q_1$ and $q_2$, which represent respectively the normalized diffusivity and the normalized flux gain at the boundary between the dermal and epidermal layers. The values or distributions of these parameters are assumed to depend on environmental conditions, the particular sensor being used, and the physiological characteristics of the individual wearing the sensor. The parameter vector is $\bm{q} = (q_1,q_2) \in Q$, where $Q$ is assumed to be a compact subset of $\mathbb{R}^+ \times \mathbb{R}^+$ with metric $d$.

In population modeling, we can statistically classify the methods as parametric or nonparametric. In the parametric approach, we assume that the general structure of the distribution is known a-priori but with unknown parameters. Then, for example, if we know that the distribution is normal with unknown mean and variance, the estimation problem is to estimate these two unknown parameters. On the other hand; in the nonparametric approach, the structure of the distribution is assumed to be unknown, and the problem is to estimate the distribution itself. In either of these paradigms, different statistical approaches to the estimation problem can be taken. For example, in \cite{Sirlanci:2019A,Sirlanci:2017,Sirlanci:2019B}, a  parametric least squares naive pooled data approach was used, while in \cite{Asserian:2021,Banks:2012,Banks:2018A}, the approach was nonparametric. In \cite{Hawekotte:2021}, a Bayesian framework was developed, and in the present treatment we consider a mixed effects (see, for example, \cite{Davidian:1995,Davidian:2003,Demidenko:2013}) maximum likelihood based statistical model. In the mixed effects model, it is assumed that observations are specific to a single individual plus a random error. The mixed effects model is a combination of the fixed-effects model, which describes the characteristics for an average individual in the population, and the random-effects model, which describes the inter-individual variability \cite{Lovern:2009}. An overview of these different statistical approaches in the context of pharmacokinetics can be found in \cite{Tatarinova:2013}. 

In addition to the work of our group on TAC to BAC/BrAC conversion cited above, other researchers have also been looking at this problem and have tried a number of different approaches.  For example, in \cite{Dougherty:2012,Dougherty:2015}, a more traditional approach based on standard linear regression techniques is developed and discussed. A number of ideas from the machine learning literature have also been considered. In \cite{Fairbairn:2021}, a scheme based on random forests is used to recover BrAC from TAC, and in our group, in \cite{Oszkinat:2021}, the authors develop a method using physics-informed neural networks, and in \cite{Oszkinat:2021A}, BrAC or BAC is estimated from observations of TAC using a Hidden Markov Model (HMM) based approach.      

An outline of the remainder of the paper is as follows. In Section \ref{sec:2}, we provide a summary of the Prohorov metric on the set of probability measures as it was used by Banks and his co-authors in \cite{Banks:2012}. In Section \ref{sec:3}, we define our mathematical model in the form of a random discrete-time dynamical system and we define the maximum likelihood estimator for the distribution of the random parameters. In the Section \ref{sec:4}, we establish the existence and consistency of the maximum likelihood estimator, while in Section \ref{sec:5}, we demonstrate the convergence of finite dimensional approximations for our estimator. In Section \ref{sec:6}, we summarize results for abstract parabolic systems, their finite dimensional approximation, and an associated convergence theory. In Section \ref{sec:7}, the application of our scheme to the transdermal transport of alcohol is presented and discussed.  This includes numerical studies for two examples, one involving simulated data and the other, actual data collected in the laboratory of one of the co-authors, Dr. Susan Luczak, in the Department of Psychology at University of Southern California (USC). For the simulated data example in Section \ref{sec:7.1}, we are able to observe the convergence of the estimated distribution of the random parameter vector $\bm{q}=(q_1,q_2)$ to the ``true" distribution as the number of drinking episodes increases, as the number of Dirac measure nodes increases, and as the level of discretization in the finite dimensional approximations increases.  We look at each case separately and in turn. In the actual data example discussed in Section \ref{sec:7.2}, we apply the leave-one-out cross-validation (LOOCV) method by first estimating the distribution of the parameter vector $\bm{q}$ using a training set, and then estimating the TAC output using the estimated distribution and the BrAC input of a testing episode.

\section{Prohorov Metric Framework}
\label{sec:2}

Banks and his co-authors developed a framework for estimation of the probability measure for random parameters in continuous-time dynamical systems based on the Prohorov metric \cite{Banks:2012}. Here, we summarize the Prohorov metric and its properties.

Let $Q$ be a Hausdorff metric space with metric $d$. Define $$C_b(Q) = \{f: Q \to \mathbb{R} \enskip | \enskip f \enskip \text{is bounded and continuous}\},$$ and given any probability measure $P \in \mathcal{P}(Q)$, where $\mathcal{P}(Q)$ denotes the set of all probability measures defined on $\Sigma_Q$, the Borel sigma algebra on $Q$, and some $\epsilon > 0$, an $\epsilon$-neighborhood of $P$ is defined by
    \begin{align*}
        B_\epsilon(P) = \bigg{\{} \tilde{P} \enskip \bigg{|} \enskip \Bigg{|} \int_Q f(\bm{q}) d\tilde{P}(\bm{q}) - \int_Q f(\bm{q}) dP(\bm{q}) \Bigg{|} < \epsilon, \enskip \text{for all} \enskip f \in C_b(Q) \bigg{\}}.
    \end{align*}
    
Let $E \in \Sigma_Q$, and define the $\epsilon$-neighborhood of $E$ by $$E^\epsilon = \{ \tilde{\bm{q}} \in Q \enskip | \enskip d(\tilde{\bm{q}},E) < \epsilon \} = \{ \tilde{\bm{q}} \in Q \enskip | \enskip \inf_{\bm{q} \in E} d(\bm{q},\tilde{\bm{q}}) < \epsilon \}.$$ 

Given two probability measures, $P$ and $\tilde{P}$ in $\mathcal{P}(Q)$, the Prohorov metric $\rho$ on $\mathcal{P}(Q) \times \mathcal{P}(Q)$ is defined such that $$\tilde{P} \in B_\epsilon(P) \iff \rho(P,\tilde{P}) < \epsilon,$$ where
    \begin{align*}
        \rho(P,\tilde{P}) = \inf \{ \epsilon > 0 \enskip | \enskip \tilde{P}(E) \leq P(E^\epsilon) + \epsilon \enskip \text{and} \enskip P(E) \leq \tilde{P}(E^\epsilon) + \epsilon, \enskip \text{for all} \enskip E \in \Sigma_Q \}.
    \end{align*}

It can be shown that $(\mathcal{P}(Q),\rho)$ is a metric space. Also, the Prohorov metric metrizes the weak convergence of measures, i.e. given a sequence of measures $P_M \in \mathcal{P}(Q)$, for all $M=1,2,\dots$, and $P \in \mathcal{P}(Q)$, $$P_M \xrightarrow{w^*} P \iff \rho(P_M,P) \to 0.$$ It is important to note that the weak$^*$ topology and the weak topology are equivalent on the space of probability measures.

For some $n_{\bm{q}}$, $Q \subseteq \mathbb{R}^{n_{\bm{q}}}$ and $P \in \mathcal{P}(Q)$, consider the random vector $X:Q \rightarrow \mathbb{R}^{n_{\bm{q}}}$ on the probability space $(Q,\Sigma_Q,P)$ given by $X(\bm{q}) = \bm{q}$ for $\bm{q} \in Q$.  The cumulative distribution function for $X$ is given by $F_X(q_1,\dots,q_n) = P(X \in \times_{\ell=1}^{n_{\bm{q}}}(-\infty,q_\ell])=P(\times_{\ell=1}^{n_{\bm{q}}}\{(-\infty,q_\ell]\cap Q\})$. In this case, it follows that if $\{P_M\}, P_0 \in (\mathcal{P}(Q)$, $\rho)$, then $\rho(P_M,P_0) \to 0$ if and only if $F_{X_M} \to F_{X_0}$ at all points of continuity of $F_{X_0}$. Consequently, Prohorov metric convergence and weak and weak$^*$ convergence in $\mathcal{P}(Q)$ are also referred to as convergence in distribution.

If $\bm{q}_1,\bm{q}_2 \in Q$, then $\rho(\delta_{\bm{q}_1},\delta_{\bm{q}_2}) = \min\{d(\bm{q}_1,\bm{q}_2),1\}$, where $\delta_{\bm{q}_j} \in D = \{\delta_{\bm{q}} \enskip | \enskip \bm{q} \in Q \},$ the space of Dirac measures on $Q$, where for all $E \in \Sigma_Q$,
\begin{align*}
    \delta_{\bm{q}}(E) =
    \begin{dcases}
    \enskip 1 & \enskip \text{if } \bm{q} \in E \\
    \enskip 0 & \enskip \text{if } \bm{q} \notin E
    \end{dcases}.
\end{align*}

The metric space $(\mathcal{P}(Q),\rho)$ is separable if and only if the metric space $(Q,d)$ is separable. The sequence $\{\bm{q}_j\}_{j=1}^\infty$ is Cauchy in $(Q,d)$ if and only if the sequence $\{\delta_{\bm{q}_j}\}_{j=1}^\infty$ is Cauchy in $(\mathcal{P}(Q),\rho)$. We also have $(Q,d)$ is complete if and only if $(\mathcal{P}(Q),\rho)$ is complete, and $(Q,d)$ is compact if and only if $(\mathcal{P}(Q),\rho)$ is compact. The details and proofs can be found in \cite{Banks:2012}.

Assume the metric space $(Q,d)$ is separable and let $Q_d = \{\bm{q}_j\}_{j=1}^{\infty}$ be a countable dense subset of $Q$. Define the dense (see \cite{Banks:2012}) subset of $\mathcal{P}(Q)$, $\tilde{\mathcal{P}}_d(Q)$, as 
\begin{align}
    \tilde{\mathcal{P}}_d(Q) = \{P \in \mathcal{P}(Q) \enskip | \enskip P=\sum_{j=1}^{M} p_j \delta_{\bm{q}_j}, \bm{q}_j \in Q_d, M \in \mathbb{N}, p_j \in [0,1] \cap \mathbb{Q}, \sum_{j=1}^{M} p_j =1\},
    \label{eq.Ptilde}
\end{align}
the collection of all convex combinations of Dirac measures on $Q$ with rational weights $p_j$ at nodes $\bm{q}_j \in Q_d$, and for each $M \in \mathbb{N}$ let
\begin{align*}
        \mathcal{P}_M(Q) = \{P \in \tilde{\mathcal{P}}_d(Q) \enskip | \enskip P=\sum_{j=1}^{M} p_j \delta_{\bm{q}_j}, \bm{q}_j \in \{\bm{q}_j\}_{j=1}^{M}\}.
\end{align*}


\section{The Mathematical Model}
\label{sec:3}

Consider the following discrete-time mathematical model for the $i^{th}$ subject at time-step $k$
\begin{align*}
    x_{k,i}(\bm{q}_i) &= g_{k-1}(x_{k-1,i}(\bm{q}_i),u_{k-1,i};\bm{q}_i), \enskip k = 1,\dots,n_i, \enskip i = 1,\dots,m,\\
    x_{0,i} &= \phi_{0,i}, \enskip i = 1,\dots,m,
\end{align*}
where $\bm{q}_i$ is the $i^{th}$ subject's parameter vector in $Q$, denoting the set of admissible parameters, $g_{k-1}: \mathcal{H} \times \mathbb{R}^{\nu} \times Q \to \mathcal{H}$, $\mathcal{H}$ is in general an infinite dimensional Hilbert space, and $u_{k-1,i} \in \mathbb{R}^{\nu}$ is the input. The output is given by
\begin{align*}
    y_{k,i}(\bm{q}_i) = h_k(x_{k,i}(\bm{q}_i), \phi_{0,i}, u_{k,i}; \bm{q}_i), \enskip k = 1,\dots,n_i, \enskip i = 1,\dots,m,
\end{align*}
where $h_k: \mathcal{H} \times \mathcal{H} \times \mathbb{R}^{\nu} \times Q \to \mathbb{R}$.

For the mixed effects model, we define
\begin{align}
    Y_{k,i} = y_{k,i}(\bm{q}_i) + e_{k,i}, \enskip k = 1,\dots,n_i, \enskip i = 1,\dots,m, \label{eq.MixedEffectsModel}
\end{align}
where for each $i=1,2,\dots,m$, $e_{k,i}$ are independent and identically distributed (i.i.d.) with mean $0$, variance $\sigma^2$, and  $e_{k,i}\sim \varphi$, $k=1,2,\dots,n_i$, where $\varphi$ is a density with respect to a sigma finite measure $\mu$ on $\mathbb{R}$ and assumed to be continuous on $\mathbb{R}$, and we assume that the random vectors $[e_{1,i},\dots, e_{n_i,i}]$ are independent with respect to $i$, $i = 1,2,\dots,m$. That is, we assume that the error is independent across individuals and conditionally independent within individuals (i.e. given $\bm{q}_i$). 
For each $i = 1,2,\dots,m$, let $\bm{Y}_i = [Y_{1,i},Y_{2,i},\dots,Y_{n_i,i}]^T$, $\bm{y}_i(\bm{q}_i) = [y_{1,i}(\bm{q}_i),y_{2,i}(\bm{q}_i),\dots,y_{n_i,i}(\bm{q}_i)]^T$, $\bm{e}_i = [e_{1,i},e_{2,i},\dots,e_{n_i,i}]^T$ and rewrite (\ref{eq.MixedEffectsModel}) as
\begin{align*}
    \bm{Y}_{i} = \bm{y}_{i}(\bm{q}_i) + \bm{e}_{i}, \enskip i = 1,\dots,m.
\end{align*}
Then, for $i=1,2,\dots,m$, $\bm{Y}_i$ are independent with $\bm{Y}_i \sim f_i(\cdot;\bm{q}_i):\mathbb{R}^{n_i} \rightarrow \mathbb{R}$, where
\begin{align}
        f_i(\bm{v};\bm{q}_i) = \prod_{k=1}^{n_i}\varphi(v_{k}-y_{k,i}(\bm{q}_i)), \enskip i=1,\dots,m,
\label{eq.f}
\end{align}
where $\bm{v} = [v_1,v_2,\dots,v_{n_i}]^T \in \mathbb{R}^{n_i}$.
Let $P \in \mathcal{P}(Q)$ denote a probability measure on the Borel sigma algebra on $Q$, $\Sigma_Q$, where $\mathcal{P}(Q)$ denotes the set of all probability measures defined on $\Sigma_Q$, and let $P_0 \in \mathcal{P}(Q)$ be the ``true" distribution of the random vector $\bm{q}_i$. The goal is to find an estimate of $P_0$. In order to generate an estimator for $P_0$, and establish theoretical results and computational tools, we use the nonparametric maximum likelihood (NPML) approach, introduced by Lindsay and Mallet in \cite{Lindsay:1983,Mallet:1986}, using the Prohorov metric-based framework on $\mathcal{P}(Q)$, introduced by Banks and his co-authors in \cite{Banks:2012}, and summarized in Section \ref{sec:2}. 

For $P \in \mathcal{P}(Q)$ and $i=1,2,\dots,m$, let
    \begin{align}
        \mathcal{L}_{i}(P;\bm{Y}_i) &= \int_Q f_i(\bm{Y}_i;\bm{q}_i) dP(\bm{q}_i) \nonumber\\
        &= \int_Q  \prod_{k=1}^{n_i} \varphi(Y_{k,i}-h_k(x_{k,i}(\bm{q}_i), \phi_{0,i}, u_{k,i}; \bm{q}_i)) dP(\bm{q}_i)
        \label{eq.Li}
    \end{align}
be the contribution of the $i^{th}$ subject to the likelihood function
\begin{align}
    \mathcal{L}_{\bm{n},m}(P;\bm{Y}) &= \prod_{i=1}^m \mathcal{L}_{i}(P;\bm{Y}_i) \nonumber \\
    &= \prod_{i=1}^m \int_Q f_i(\bm{Y}_i;\bm{q}_i) dP(\bm{q}_i) \nonumber\\
    &= \prod_{i=1}^m \int_Q  \prod_{k=1}^{n_i} \varphi(Y_{k,i}-h_k(x_{k,i}(\bm{q}_i), \phi_{0,i}, u_{k,i}; \bm{q}_i)) dP(\bm{q}_i),
    \label{eq.Lnm}
\end{align}
where $\bm{n}=\{n_i\}_{i=1}^m$ and $\bm{Y} = \{\bm{Y}_i\}_{i=1}^{m}$. The goal is to find $P$ that maximizes the likelihood function. Define the estimator
\begin{align}
    P_{\bm{n},m} &= \arg \max_{P \in \mathcal{P}(Q)} \mathcal{L}_{\bm{n},m}(P;\bm{Y}) \nonumber\\
    &= \arg \max_{P \in \mathcal{P}(Q)} \prod_{i=1}^m \int_Q f_i(\bm{Y}_i;\bm{q}_i)  dP(\bm{q}_i) \nonumber\\
    &= \arg \max_{P \in \mathcal{P}(Q)} \prod_{i=1}^m \int_Q \prod_{k=1}^{n_i} \varphi(Y_{k,i}-h_k(x_{k,i}(\bm{q}_i), \phi_{0,i}, u_{k,i}; \bm{q}_i)) dP(\bm{q}_i). \label{eq.Pnm}
\end{align}

Let $\mathscr{\hat{y}}_{k,i}$ be realizations of the random variables $Y_{k,i}$, and define
\begin{align}
    \hat{P}_{\bm{n},m} &= \arg \max_{P \in \mathcal{P}(Q)} \mathcal{L}_{\bm{n},m}(P;\pmb{\mathscr{\hat{y}}}) \nonumber\\
    &= \arg \max_{P \in \mathcal{P}(Q)} \prod_{i=1}^m \int_Q f_i(\pmb{\mathscr{\hat{y}}}_i;\bm{q}_i)  dP(\bm{q}_i) \nonumber\\
    &= \arg \max_{P \in \mathcal{P}(Q)} \prod_{i=1}^m \int_Q \prod_{k=1}^{n_i} \varphi(\mathscr{\hat{y}}_{k,i}-h_k(x_{k,i}(\bm{q}_i), \phi_{0,i}, u_{k,i}; \bm{q}_i)) dP(\bm{q}_i),
    \label{Phatnm}
\end{align}
where $\pmb{\mathscr{\hat{y}}} = \{{\pmb{\mathscr{\hat{y}}}}_i\}_{i=1}^{m}$, with ${\pmb{\mathscr{\hat{y}}}}_i = [\mathscr{\hat{y}}_{1,i},\mathscr{\hat{y}}_{2,i},\dots,\mathscr{\hat{y}}_{n_i,i}]^T$.

The results of Lindsay and Mallet in \cite{Lindsay:1983,Mallet:1986} states that the maximum likelihood estimator $\hat{P}_{\bm{n},m}$ can be found in the class of discrete distributions with at most $m$ support points, i.e. $\hat{P}_{\bm{n},m} \in \mathcal{P}_M(Q)$, where $M \leq m$.

We cannot exactly compute the maximum likelihood estimator, $\hat{P}_{\bm{n},m}$, since $y_{k,i}$ must be approximated numerically by $y^N_{k,i}$ using a Galerkin numerical scheme with $N$ denoting the level of discretization. Also, similarly, we define our approximating estimator over the set $\mathcal{P}_M(Q)$ where $M$ denotes the number of nodes, $\{\bm{q}_j\}_{j=1}^M$. As a result, the optimization is over a finite set of parameters, being the rational weights $\{p_j\}_{j=1}^M$. Thus, our approximating estimator is
\begin{align}
    \hat{P}_{\bm{n},m,M}^N &= \arg \max_{P \in \mathcal{P}_M(Q)} \mathcal{L}_{\bm{n},m}^N(P;\pmb{\mathscr{\hat{y}}}) \nonumber\\
    &= \arg \max_{P \in \mathcal{P}_M(Q)} \prod_{i=1}^m \int_Q f_i^N(\pmb{\mathscr{\hat{y}}}_i;\bm{q}_i) dP(\bm{q}_i) \nonumber\\
    &= \arg \max_{P \in \mathcal{P}_M(Q)} \prod_{i=1}^m \int_Q \prod_{k=1}^{n_i} \varphi(\mathscr{\hat{y}}_{k,i}-h_k(x^N_{k,i}(\bm{q}_i), \phi_{0,i}, u_{k,i}; \bm{q}_i)) dP(\bm{q}_i). \label{eq.PhatnmNM}
\end{align}


\section{Existence and Consistency of the Maximum Likelihood Estimator}
\label{sec:4}

In \cite{Lindsay:1983}, the existence and uniqueness of a maximum likelihood estimator of a mixing distribution using the geometry of mixture likelihoods was established. Similarly, in \cite{Mallet:1986}, the existence and uniqueness of the maximum likelihood estimator for the distribution of the parameters of a random coefficient regression model was established. Here we provide an existence argument based on the maximization of a continuous function over a compact set.

The following theorem establishes the existence of the estimator $\hat{P}_{\bm{n},m}$ in (\ref{eq.PhatnmNM}), obtained from the realizations $\{\mathscr{\hat{y}}_{k,i}\}, \enskip k=1,\dots,n_i, \enskip i=1,\dots,m$ of the random variables $\{Y_{k,i}\}, \enskip k=1,\dots,n_i, \enskip i=1,\dots,m$. This is sufficient for establishing the existence of the maximum likelihood estimator $P_{\bm{n},m}$ in (\ref{eq.Pnm}).

\begin{theorem}
    For $i=1,2,\dots,m$, let $\mathcal{L}_{i}$ be given by equation (\ref{eq.Li}), and let $\mathcal{L}_{\bm{n},m}(P;\pmb{\mathscr{\hat{y}}})$ be given by equation (\ref{eq.Lnm}) where $\pmb{\mathscr{\hat{y}}} = \{{\pmb{\mathscr{\hat{y}}}}_i\}_{i=1}^{m}$, with ${\pmb{\mathscr{\hat{y}}}}_i = [\mathscr{\hat{y}}_{1,i},\mathscr{\hat{y}}_{2,i},\dots,\mathscr{\hat{y}}_{n_i,i}]^T$.
    Assume that for each $\pmb{\mathscr{\hat{y}}}_i$, we have a continuous function $\mathcal{L}_{i}(.;\pmb{\mathscr{\hat{y}}}_i) : \mathcal{P}(Q) \to \mathbb{R}$, and also for each $P \in \mathcal{P}(Q)$ we have a measurable function $\mathcal{L}_{i}(P;.) : \mathbb{R}^{n_i} \to \mathbb{R}$. Then there exists a measurable function $\hat{P}_{\bm{n},m} : \prod_{i=1}^m \mathbb{R}^{n_i} \to \mathcal{P}(Q)$ such that $$\mathcal{L}_{\bm{n},m}(\hat{P}_{\bm{n},m};\pmb{\mathscr{\hat{y}}}) = \sup_{P \in \mathcal{P}(Q)} \mathcal{L}_{\bm{n},m}(P;\pmb{\mathscr{\hat{y}}}).$$
\end{theorem}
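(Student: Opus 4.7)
The plan is to split the argument into an existence step (the argmax is attained for every fixed data realization) and a measurable selection step (one can choose these maximizers in a measurable way). The existence step rests on the compactness machinery from Section~\ref{sec:2}: since $Q$ is compact, $(\mathcal{P}(Q),\rho)$ is compact. For each fixed $\pmb{\mathscr{\hat{y}}}$, the map $P \mapsto \mathcal{L}_{\bm{n},m}(P;\pmb{\mathscr{\hat{y}}}) = \prod_{i=1}^m \mathcal{L}_i(P;\pmb{\mathscr{\hat{y}}}_i)$ is a finite product of continuous real-valued functions on $\mathcal{P}(Q)$ by hypothesis, hence continuous, and so attains its supremum on the compact set $\mathcal{P}(Q)$. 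Consequently the argmax correspondence
\[
\Gamma(\pmb{\mathscr{\hat{y}}}) \;:=\; \Big\{P \in \mathcal{P}(Q) : \mathcal{L}_{\bm{n},m}(P;\pmb{\mathscr{\hat{y}}}) = \sup_{P' \in \mathcal{P}(Q)} \mathcal{L}_{\bm{n},m}(P';\pmb{\mathscr{\hat{y}}})\Big\}
\]
has non-empty and (by continuity in $P$) closed, hence compact, values in $\mathcal{P}(Q)$.

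For the measurability step I would first note that $\mathcal{L}_{\bm{n},m}$ is a Carath\'eodory function on $\mathcal{P}(Q) \times \prod_{i=1}^m \mathbb{R}^{n_i}$: continuity in $P$ is given, and measurability in $\pmb{\mathscr{\hat{y}}}$ is preserved under finite products of measurable functions of disjoint blocks of variables. Since $(\mathcal{P}(Q),\rho)$ is a compact metric space, it is Polish and separable, so a standard fact yields joint Borel measurability of $\mathcal{L}_{\bm{n},m}$ and also measurability of the value function $V(\pmb{\mathscr{\hat{y}}}) := \sup_{P \in \mathcal{P}(Q)} \mathcal{L}_{\bm{n},m}(P;\pmb{\mathscr{\hat{y}}})$; the latter is obtained by replacing the supremum over $\mathcal{P}(Q)$ with a countable supremum over a dense subset (available from separability of $(\mathcal{P}(Q),\rho)$), using continuity of $\mathcal{L}_{\bm{n},m}(\cdot;\pmb{\mathscr{\hat{y}}})$. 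This makes the graph
\[
\mathrm{gr}(\Gamma) \;=\; \big\{(\pmb{\mathscr{\hat{y}}},P) : \mathcal{L}_{\bm{n},m}(P;\pmb{\mathscr{\hat{y}}}) = V(\pmb{\mathscr{\hat{y}}})\big\}
\]
measurable in the product $\sigma$-algebra.

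With compact non-empty values, a measurable graph, and $\mathcal{P}(Q)$ Polish, I would then invoke a standard measurable selection theorem (Kuratowski--Ryll-Nardzewski, or the Aumann/Brown--Purves version) to extract a measurable selector $\hat{P}_{\bm{n},m} : \prod_{i=1}^m \mathbb{R}^{n_i} \to \mathcal{P}(Q)$ with $\hat{P}_{\bm{n},m}(\pmb{\mathscr{\hat{y}}}) \in \Gamma(\pmb{\mathscr{\hat{y}}})$ for every $\pmb{\mathscr{\hat{y}}}$. The identity $\mathcal{L}_{\bm{n},m}(\hat{P}_{\bm{n},m};\pmb{\mathscr{\hat{y}}}) = \sup_{P \in \mathcal{P}(Q)} \mathcal{L}_{\bm{n},m}(P;\pmb{\mathscr{\hat{y}}})$ then holds by construction.

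The main obstacle is not the existence half, which is immediate from continuity plus compactness, but verifying the hypotheses of the measurable selection theorem. The delicate points are (a) justifying joint measurability of $\mathcal{L}_{\bm{n},m}$ from the separate continuity/measurability hypotheses via the Carath\'eodory structure, and (b) showing that $V$ is measurable and that $\mathrm{gr}(\Gamma)$ is product-measurable, so that the selector lives in the Borel $\sigma$-algebra generated by the Prohorov metric on $\mathcal{P}(Q)$. Once these are in place, the conclusion follows directly from the selection theorem.
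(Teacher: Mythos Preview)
Your proposal is correct and follows essentially the same approach as the paper, which simply defers to \cite{Banks:2012} with the remark that one takes a sup rather than an inf of a continuous function over a compact set. You have written out precisely the two ingredients that reference is pointing to: existence of a maximizer via compactness of $(\mathcal{P}(Q),\rho)$ plus continuity of $P\mapsto\mathcal{L}_{\bm{n},m}(P;\pmb{\mathscr{\hat{y}}})$, and then a measurable selection argument (Carath\'eodory structure together with a Kuratowski--Ryll-Nardzewski/Brown--Purves type theorem) to obtain a measurable $\hat{P}_{\bm{n},m}$.
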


\begin{proof}
    The theorem can be proven in a similar way as in \cite{Banks:2012} with the difference that we are taking the sup (instead of the inf) of a continuous function over a compact set.
    \qed
\end{proof}

In order to establish the consistency of the maximum likelihood estimator $P_{\bm{n},m}$, we show that $\rho(P_{\bm{n},m},P_0)$ converges almost surely to zero.  We do this by applying a theorem by Kiefer and Wolfowitz in \cite{Kiefer:1956}, establishing that the nonparametric maximum likelihood approach is statistically consistent. In other words, as the number of subjects, $m$, gets larger, the estimator $P_{\bm{n},m}$ converges in probability to $P_0$, the ``true" distribution, in the sense of the Prohorov metric, or weakly, or in distribution. Here, we have set up our problem in a way that makes establishing the consistency a straightforward application of the consistency result in \cite{Kiefer:1956}. 

\begin{theorem}
For each $i=1,...,m$, assume that the map $\bm{q}_i \mapsto f_i(\bm{v};\bm{q}_i)$ from $Q$ into $\mathbb{R}$ is continuous for each $\bm{v} \in \mathbb{R}^{n_i}$, and $f_i(\bm{v};\bm{q}_i)$ is measurable in $\bm{v}$ for any $\bm{q}_i \in Q$, where $f_i$ is given by (\ref{eq.f}). Assume further that $P_0$ is identifiable; that is, for $P_1 \in \mathcal{P}(Q)$ with $P_1 \neq P_0$, we have 
\begin{align*}
  \prod_{i=1}^m \int_{0}^{\bm{z}_i} \int_Q  \prod_{k=1}^{n_i} \varphi(z_{k}-h_k(x_{k,i}(\bm{q}_i), \phi_{0,i}, u_{k,i}; \bm{q}_i)) dP_1(\bm{q}_i)d\mu^{n_i} & \\
  \neq \prod_{i=1}^m \int_{0}^{\bm{z}_i} \int_Q  \prod_{k=1}^{n_i} \varphi(z_{k}-h_k(x_{k,i}(\bm{q}_i), \phi_{0,i}, u_{k,i}; \bm{q}_i)) dP_0(\bm{q}_i)d\mu^{n_i},
\end{align*}
for at least one $\bm{z} = [\bm{z}_1^T,\dots,\bm{z}_m^T]^T \in \mathbb{R}^{\sum_{i=1}^m n_i}$, where for $i=1,2,\dots,m$, $\bm{z}_i = [z_1,\dots,z_{n_i}]^T \in \mathbb{R}^{n_i}$, and the technical integrability assumption holds; that is, for any $P \in \mathcal{P}(Q)$,
    \begin{align*}
        \lim_{\epsilon \downarrow 0} E_{P_0} \Bigg{[} \log \frac{\sup\limits_{\tilde{P} \in B_\epsilon(P)} \mathcal{L}_{i}(\tilde{P};\bm{Y}_i)}{\mathcal{L}_{i}(P_0;\bm{Y}_i)} \Bigg{]}^+ < \infty,
    \end{align*}
where $\mathcal{L}_i$ is given by equation (\ref{eq.Li}). Then, as $m \to \infty$, $\rho(P_{\bm{n},m},P_0) \to 0$ almost surely (i.e with probability 1) and therefore in probability as well.
\end{theorem}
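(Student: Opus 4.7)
The plan is to cast the statement as a direct verification of the hypotheses of the Kiefer-Wolfowitz consistency theorem for nonparametric maximum likelihood estimators, applied in the compact metric space $(\mathcal{P}(Q),\rho)$. The general strategy of that argument is: (i) show that the parameter space (here $\mathcal{P}(Q)$) is a compact metric space, (ii) show that the per-subject contribution $P \mapsto \mathcal{L}_i(P;\bm{Y}_i)$ is continuous, (iii) show via Jensen's inequality that the expected log-likelihood ratio $E_{P_0}[\log(\mathcal{L}_i(P;\bm{Y}_i)/\mathcal{L}_i(P_0;\bm{Y}_i))]$ is nonpositive with equality only at $P_0$, and (iv) use a compactness/open-cover argument, powered by the integrability condition, to upgrade this to uniform control of the log-likelihood ratio away from $P_0$, forcing $\rho(P_{\bm{n},m},P_0) \to 0$.

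First I would invoke the Prohorov-metric results recalled in Section \ref{sec:2}: since $Q$ is compact, $(\mathcal{P}(Q),\rho)$ is a compact metric space in which Prohorov convergence coincides with weak convergence. Next, for the continuity step, fix $\bm{v} \in \mathbb{R}^{n_i}$; by the continuity hypothesis on $\bm{q}_i \mapsto f_i(\bm{v};\bm{q}_i)$ and the boundedness of $f_i(\bm{v};\cdot)$ on the compact set $Q$, we have $f_i(\bm{v};\cdot) \in C_b(Q)$, hence if $\rho(P_k,P) \to 0$ then $\mathcal{L}_i(P_k;\bm{v}) = \int_Q f_i(\bm{v};\bm{q}_i)\,dP_k(\bm{q}_i) \to \int_Q f_i(\bm{v};\bm{q}_i)\,dP(\bm{q}_i) = \mathcal{L}_i(P;\bm{v})$ by the very definition of weak convergence. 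The measurability of $\mathcal{L}_i(P;\cdot)$ in the data is inherited from the measurability of $f_i(\cdot;\bm{q}_i)$ and Fubini's theorem. The identifiability and integrability hypotheses are assumed outright.

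With these pieces in hand, I would then apply the Kiefer-Wolfowitz argument. For each fixed $P \neq P_0$, Jensen's inequality combined with identifiability yields $E_{P_0}[\log(\mathcal{L}_i(P;\bm{Y}_i)/\mathcal{L}_i(P_0;\bm{Y}_i))] < 0$ for some subject index (or averaged across $i$). Covering the compact complement $\mathcal{P}(Q) \setminus B_\delta(P_0)$ by finitely many Prohorov balls $B_{\epsilon_j}(P^{(j)})$ with $\epsilon_j$ chosen so small that the integrability condition gives
\begin{equation*}
E_{P_0}\!\left[\log \frac{\sup_{\tilde P \in B_{\epsilon_j}(P^{(j)})} \mathcal{L}_i(\tilde P;\bm{Y}_i)}{\mathcal{L}_i(P_0;\bm{Y}_i)}\right] < 0,
\end{equation*}
a strong law of large numbers (suitably adapted to the independent but not identically distributed $\bm{Y}_i$) then forces
\begin{equation*}
\limsup_{m\to\infty} \frac{1}{m} \sum_{i=1}^m \log \frac{\sup_{\tilde P \in B_{\epsilon_j}(P^{(j)})} \mathcal{L}_i(\tilde P;\bm{Y}_i)}{\mathcal{L}_i(P_0;\bm{Y}_i)} < 0 \quad \text{a.s.}
\end{equation*}
for each of the finitely many covering balls. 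This rules out the event that the maximizer $P_{\bm{n},m}$ lies outside $B_\delta(P_0)$ for infinitely many $m$, giving $\rho(P_{\bm{n},m},P_0) \to 0$ almost surely.

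The main obstacle I anticipate is the non-i.i.d.\ structure of the data: the likelihood factors $\mathcal{L}_i$ depend on subject-specific inputs $u_{k,i}$, initial states $\phi_{0,i}$, and record lengths $n_i$, so a naive invocation of Kiefer-Wolfowitz (which is stated for i.i.d.\ observations) does not apply verbatim. The cleanest workaround is to regard the triples $(\phi_{0,i},\{u_{k,i}\},n_i)$ as drawn from a common joint distribution together with $\bm{q}_i \sim P_0$, so that $\bm{Y}_i$ become genuinely i.i.d.\ samples from a marginal mixture; alternatively one uses a Kolmogorov-type SLLN for independent summands with a uniform integrability hypothesis supplied by the stated technical condition. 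Either route reduces the consistency claim to the cited Kiefer-Wolfowitz theorem, with the Prohorov-metric/compactness machinery of Section \ref{sec:2} doing the rest of the work.
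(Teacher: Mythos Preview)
Your proposal is correct and takes essentially the same approach as the paper: both reduce the consistency claim to a direct application of the Kiefer--Wolfowitz theorem, with the Prohorov-metric compactness of $\mathcal{P}(Q)$ and the stated continuity, identifiability, and integrability hypotheses supplying the needed assumptions. You go further than the paper by actually sketching the open-cover/SLLN mechanism inside Kiefer--Wolfowitz and by flagging the non-i.i.d.\ structure across subjects---a genuine subtlety the paper's proof passes over without comment---but the underlying strategy is the same.
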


\begin{proof}
    The assumptions we have made in the previous section and in the statement of the theorem are sufficient to argue that assumptions 1-5 in \cite{Kiefer:1956} are satisfied. The conclusion of the consistency result in \cite{Kiefer:1956} is that the cumulative distribution functions, $F_{\bm{n},m}$, corresponding to $P_{\bm{n},m}$ converge almost surely to the cumulative distribution function $F_0$ corresponding to $P_0$ at every point of continuity of $F_0$. It follows that $\rho(P_{\bm{n},m},P_0) \to 0$ almost surely (i.e with probability 1); thus, in probability as well, as $m \to \infty$, and the theorem is proven.
    \qed
\end{proof}


\section{Convergence of the Finite Dimensional Approximations}
\label{sec:5}

We want to establish the convergence of the finite dimensional maximum likelihood estimators to the maximum likelihood estimator corresponding to the infinite dimensional model. As mentioned earlier, we cannot actually compute $\hat{P}_{\bm{n},m}$ in (\ref{Phatnm}) and consequently we approximate it by $\hat{P}_{\bm{n},m,M}^N$ in (\ref{eq.PhatnmNM}). Consider the following assumptions,
\begin{itemize}[leftmargin=0.25in]
    \item[\textbf{A1.}] For all $\bm{n}$, $m$, and $N$, the map $P \mapsto \mathcal{L}_{\bm{n},m}^N(P;\pmb{\mathscr{\hat{y}}})$ is a continuous map.
    
    \item[\textbf{A2.}] For any $\{P_M\}, P \in \mathcal{P}(Q)$ such that $\rho(P_M,P) \to 0$, we have $\mathcal{L}_{i}^N(P_M;\pmb{\mathscr{\hat{y}}}_i) \to \mathcal{L}_{i}(P;\pmb{\mathscr{\hat{y}}}_i)$ as $N,M \to \infty$ for $i=1,\dots,m$.
    
    \item[\textbf{A3.}] For all $P \in \mathcal{P}(Q)$ and $i=1,\dots,m$, $\mathcal{L}_{i}^N(P_M;\pmb{\mathscr{\hat{y}}}_i)$ and $\mathcal{L}_{i}(P;\pmb{\mathscr{\hat{y}}}_i)$ are uniformly bounded.
\end{itemize}

\begin{theorem}
    Under assumptions \emph{\textbf{A1-A3}}, there exists maximizers $\hat{P}_{\bm{n},m,M}^N$ given by equation (\ref{eq.PhatnmNM}).
    In addition, there exists a subsequence of $\hat{P}_{\bm{n},m,M}^N$ that converges to $\hat{P}_{\bm{n},m}$ given by equation (\ref{Phatnm}) as $M,N \to \infty$.
    \label{Thm:compconv}
\end{theorem}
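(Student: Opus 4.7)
The plan is to prove existence by a standard extreme-value argument on a compact parameter set, and then to prove the subsequential convergence by combining compactness of $(\mathcal{P}(Q),\rho)$ with the density of $\tilde{\mathcal{P}}_d(Q)=\bigcup_{M\in\mathbb{N}}\mathcal{P}_M(Q)$ in $\mathcal{P}(Q)$ and the approximation hypotheses \textbf{A2}--\textbf{A3}.

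For the existence of $\hat{P}_{\bm{n},m,M}^N$, I would fix the first $M$ nodes $\{\bm{q}_j\}_{j=1}^M\subset Q_d$ and view $\mathcal{P}_M(Q)$ (after closing up the rational-weight constraint to the full real simplex, which does not affect the supremum because of continuity) as the image of the compact probability simplex $\{(p_1,\dots,p_M)\in[0,1]^M:\sum_j p_j=1\}$ under the continuous map $(p_1,\dots,p_M)\mapsto \sum_j p_j\delta_{\bm{q}_j}$ (continuity with respect to $\rho$ follows from the formula $\rho(\delta_{\bm{q}_i},\delta_{\bm{q}_j})=\min\{d(\bm{q}_i,\bm{q}_j),1\}$ recorded in Section~\ref{sec:2}). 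Then by \textbf{A1}, $P\mapsto\mathcal{L}_{\bm{n},m}^N(P;\pmb{\mathscr{\hat{y}}})$ is continuous, so it attains its maximum, producing $\hat{P}_{\bm{n},m,M}^N$.

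For the convergence assertion, I would first use compactness of $(\mathcal{P}(Q),\rho)$, which follows from the assumed compactness of $(Q,d)$ as noted in Section~\ref{sec:2}, to extract a subsequence (indexed jointly in $(M,N)$) along which $\hat{P}_{\bm{n},m,M}^N\to\tilde{P}$ in the Prohorov metric for some $\tilde{P}\in\mathcal{P}(Q)$. The goal is to show $\tilde{P}$ is a maximizer of $\mathcal{L}_{\bm{n},m}(\cdot;\pmb{\mathscr{\hat{y}}})$ over $\mathcal{P}(Q)$, so that it may be identified with $\hat{P}_{\bm{n},m}$ given by (\ref{Phatnm}). Pick any such maximizer $P^{\star}=\hat{P}_{\bm{n},m}$ (which exists by the previous theorem), and use density of $\tilde{\mathcal{P}}_d(Q)$ in $\mathcal{P}(Q)$ to produce $P_M\in\mathcal{P}_M(Q)$ with $\rho(P_M,P^{\star})\to0$. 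By definition of the approximating maximizer,
\begin{equation*}
    \mathcal{L}_{\bm{n},m}^N(\hat{P}_{\bm{n},m,M}^N;\pmb{\mathscr{\hat{y}}}) \;\geq\; \mathcal{L}_{\bm{n},m}^N(P_M;\pmb{\mathscr{\hat{y}}}).
\end{equation*}
Assumption \textbf{A2} applied to each subject-level factor gives $\mathcal{L}_i^N(P_M;\pmb{\mathscr{\hat{y}}}_i)\to\mathcal{L}_i(P^{\star};\pmb{\mathscr{\hat{y}}}_i)$ on the right and $\mathcal{L}_i^N(\hat{P}_{\bm{n},m,M}^N;\pmb{\mathscr{\hat{y}}}_i)\to\mathcal{L}_i(\tilde{P};\pmb{\mathscr{\hat{y}}}_i)$ on the left along the subsequence, and the uniform boundedness in \textbf{A3} lets me pass these limits through the finite product defining $\mathcal{L}_{\bm{n},m}$ in (\ref{eq.Lnm}). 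Letting $M,N\to\infty$ yields $\mathcal{L}_{\bm{n},m}(\tilde{P};\pmb{\mathscr{\hat{y}}})\geq\mathcal{L}_{\bm{n},m}(P^{\star};\pmb{\mathscr{\hat{y}}})$; since $P^{\star}$ is a maximizer, equality must hold and $\tilde{P}$ is itself a maximizer, which we identify with $\hat{P}_{\bm{n},m}$.

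The main obstacle I anticipate is orchestrating the joint limit $M,N\to\infty$ cleanly. Since \textbf{A2} is stated for simultaneous indices, I need to make sure both sides of the inequality above are driven by the same diagonal sequence: the Prohorov-convergent subsequence of approximating maximizers on the left and the approximating sequence $P_M\to P^{\star}$ on the right must be aligned via a diagonalization. A secondary subtlety is that $\mathcal{P}_M(Q)$ as written uses rational weights, which is not compact in itself; this is handled by noting that its closure in $(\mathcal{P}(Q),\rho)$ is the full real simplex of $M$-point mixtures, and by continuity (\textbf{A1}) the supremum over the rational set agrees with the maximum over the closure, so existence of $\hat{P}_{\bm{n},m,M}^N$ is preserved and the density argument used above still applies.
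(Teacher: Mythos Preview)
Your proposal is correct and follows essentially the same approach as the paper: existence via continuity (\textbf{A1}) on a compact feasible set, and subsequential convergence by extracting a Prohorov-convergent subsequence via compactness of $(\mathcal{P}(Q),\rho)$, approximating an arbitrary $P\in\mathcal{P}(Q)$ by $P_M\in\mathcal{P}_M(Q)$ via density, and passing the optimality inequality to the limit using \textbf{A2}--\textbf{A3}. Your treatment is in fact slightly more careful than the paper's in handling the rational-weight issue and the joint $(M,N)$ diagonalization, and you have the optimality inequality in the correct direction (the paper's printed $\leq$ is a typo for $\geq$).
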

\begin{proof}
For all $\bm{n}$, $m$, and $N$, by continuity of the map $P \mapsto \mathcal{L}_{\bm{n},m}^N(P;\pmb{\mathscr{\hat{y}}})$, per assumption \textbf{A1}, and compactness of $(\mathcal{P}(Q),\rho)$, we can conclude that $\hat{P}_{\bm{n},m,M}^N$ exists.

In \cite{Banks:2012}, it is shown that $\tilde{\mathcal{P}}_d(Q)$ given by equation (\ref{eq.Ptilde}) is a dense subset of $\mathcal{P}(Q)$. Thus, for $M=1,2,\dots$, construct a sequence of probability measures $P_M \in \mathcal{P}_M(Q) \subset \tilde{\mathcal{P}}_d(Q) \subset \mathcal{P}(Q)$, such that $\rho(P_M,P) \to 0$ in $\mathcal{P}(Q)$. Then by assumptions \textbf{A2} and \textbf{A3}, we have
\begin{align*}
   \bigg{|} \mathcal{L}_{\bm{n},m}^N(P_M;\pmb{\mathscr{\hat{y}}}) - \mathcal{L}_{\bm{n},m}(P;\pmb{\mathscr{\hat{y}}}) \bigg{|} &= \Bigg{|} \prod_{i=1}^m \mathcal{L}_{i}^N(P_M;\pmb{\mathscr{\hat{y}}}_i) - \prod_{i=1}^m \mathcal{L}_{i}(P;\pmb{\mathscr{\hat{y}}}_i) \Bigg{|} \to 0.
\end{align*}
Consequently, $\mathcal{L}_{\bm{n},m}^N(P_M;\pmb{\mathscr{\hat{y}}}) \to \mathcal{L}_{\bm{n},m}(P;\pmb{\mathscr{\hat{y}}})$ as $N,M \to \infty$.\\
In addition, by definition, for each $\bm{n}$, $m$, and $N$, and for all $P_M \in \mathcal{P}_M(Q)$, we have
\begin{align}
    \mathcal{L}_{\bm{n},m}^N(\hat{P}_{\bm{n},m,M}^N; \pmb{\mathscr{\hat{y}}}) \leq \mathcal{L}_{\bm{n},m}^N(P_M;\pmb{\mathscr{\hat{y}}}).
    \label{eq.inequality}
\end{align}
In addition, by compactness of $\mathcal{P}(Q)$, there exists a subsequence of $\hat{P}_{\bm{n},m,M}^N$ that converges to $\hat{P}_{\bm{n},m}$ as $M,N \to \infty$.
Thus, by taking the limit in (\ref{eq.inequality}) as $M,N \to \infty$, for all $P \in \mathcal{P}(Q)$, we find that
\begin{align*}
    \mathcal{L}_{\bm{n},m}(\hat{P}_{\bm{n},m}; \pmb{\mathscr{\hat{y}}}) \leq \mathcal{L}_{\bm{n},m}(P;\pmb{\mathscr{\hat{y}}});
\end{align*}
thus, $\hat{P}_{\bm{n},m} = \arg \max\limits_{P \in \mathcal{P}(Q)} \mathcal{L}_{\bm{n},m}(P;\pmb{\mathscr{\hat{y}}})$ as given in equation (\ref{Phatnm}).
\qed
\end{proof}

In practice, to achieve a desired level of accuracy, $M$ and $N$ are fixed sufficiently large. We choose a sufficiently large value for $N$, how large that needs to be, of course, depends on the particular numerical discretization scheme chosen. The most common choice would be using a Galerkin-based method to the approximate ${y}_{k,i}$ by ${y}_{k,i}^N$. We also choose a sufficiently large value for $M$, the number of nodes, $\{\bm{q}_j\}_{j=1}^{M}$. Therefore, the optimization problem is reduced to a standard constrained estimation problem over Euclidean $M$-space, in which we determine the values of the weights $p_j$ at each node $\bm{q}_j$ with the constraints that they all be non-negative and sum to one. By equation (\ref{eq.PhatnmNM}). It follows that
\begin{align*}
    \hat{P}_{\bm{n},m,M}^N &= \arg \max_{P \in \mathcal{P}_M(Q)} \mathcal{L}_{\bm{n},m}^N(P;\pmb{\mathscr{\hat{y}}}) \\
    &= \arg \max_{P \in \mathcal{P}_M(Q)} \prod_{i=1}^m \int_Q \prod_{k=1}^{n_i} \varphi(\mathscr{\hat{y}}_{k,i}-y^N_{k,i}(\bm{q}_i)) dP(\bm{q}_i)\\
    &= \arg \max_{P \in \mathcal{P}_M(Q)} \prod_{i=1}^m \int_Q \prod_{k=1}^{n_i} \varphi(\mathscr{\hat{y}}_{k,i}-h_k(x_{k,i}^N (\bm{q}_i), \phi_{0,i}^N, u_{k,i}; \bm{q}_i)) dP(\bm{q}_i)\\
    &= \arg \max_{\tilde{\bm{p}} \in \widetilde{\mathbb{R}^M}} \prod_{i=1}^m \sum_{j=1}^M \prod_{k=1}^{n_i} \varphi(\mathscr{\hat{y}}_{k,i}-h_k(x_{k,i}^N, \phi_{0,i}^N, u_{k,i}; \bm{q}_j)) p_j\\
    &= \arg \max_{\tilde{\bm{p}} \in \widetilde{\mathbb{R}^M}} \prod_{i=1}^m \sum_{j=1}^M p_j \prod_{k=1}^{n_i} \varphi(\mathscr{\hat{y}}_{k,i}-h_k(x_{k,i}^N, \phi_{0,i}^N, u_{k,i}; \bm{q}_j)),
\end{align*}
where $\tilde{\bm{p}} = (p_1,\dots,p_M) \in \widetilde{\mathbb{R}^M} = \{\tilde{\bm{p}} \enskip | \enskip p_j \in \mathbb{R}^+ , \sum_{j=1}^M p_j = 1\}$.

We note that computing $\hat{P}_{\bm{n},m,M}^N$ involves high order products of very small numbers which not unexpectedly can cause numerical underflow. In order to mitigate this, we maximize the log-likelihood function instead and rewrite it in a form that lends itself to the use of the MATLAB optimization routine \textit{logsumexp} as follows
\begin{align}
\label{eq.logsumexp}
    \hat{P}_{\bm{n},m,M}^N &= \arg \max_{\tilde{\bm{p}} \in \widetilde{\mathbb{R}^M}} \log\bigg{(}\prod_{i=1}^m \sum_{j=1}^M p_j \prod_{k=1}^{n_i} \varphi(\mathscr{\hat{y}}_{k,i}-h_k(x_{k,i}^N, \phi_{0,i}^N, u_{k,i}; \bm{q}_j))\bigg{)} \\
    &= \arg \max_{\tilde{\bm{p}} \in \widetilde{\mathbb{R}^M}} \sum_{i=1}^m \log\bigg{(}\sum_{j=1}^M p_j \prod_{k=1}^{n_i} \varphi(\mathscr{\hat{y}}_{k,i}-h_k(x_{k,i}^N, \phi_{0,i}^N, u_{k,i}; \bm{q}_j))\bigg{)} \nonumber\\
    &= \arg \max_{\tilde{\bm{p}} \in \widetilde{\mathbb{R}^M}} \sum_{i=1}^m \log\bigg{(}\sum_{j=1}^M \exp\bigg{(}\log\big{(}p_j \prod_{k=1}^{n_i} \varphi(\mathscr{\hat{y}}_{k,i}-h_k(x_{k,i}^N, \phi_{0,i}^N, u_{k,i}; \bm{q}_j))\big{)}\bigg{)}\bigg{)} \nonumber\\
    &= \arg \max_{\tilde{\bm{p}} \in \widetilde{\mathbb{R}^M}} \sum_{i=1}^m \log\bigg{(}\sum_{j=1}^M \exp\bigg{(}\log (p_j) + \sum_{k=1}^{n_i} \log\big{(} \varphi(\mathscr{\hat{y}}_{k,i}-h_k(x_{k,i}^N, \phi_{0,i}^N, u_{k,i}; \bm{q}_j))\big{)}\bigg{)}\bigg{)}. \nonumber
\end{align}


\section{Abstract Parabolic Systems}
\label{sec:6}
In order to apply our estimation theory to equations (\ref{eq.1})-(\ref{eq.5}), our model for the transdermal transport of ethanol given in Section \ref{intro}, we reformulate it as an abstract parabolic system. We briefly describe what an abstract parabolic system is, its properties, and its finite dimensional approximation, and then we show how assumptions \textbf{A1}-\textbf{A3} are satisfied for such a system. 

Let $H$ and $V$ be Hilbert spaces with $V$ densely and continuously embedded in $H$.  Pivoting on $H$, it follows that $H$ is therefore densely and continuously embedded in the dual of $V$, $V^*$. This is known as a Gelfand triple and is generally written as $V \hookrightarrow H \hookrightarrow V^*$ \cite{Tanabe:1979}. Then an abstract parabolic system is a dynamical system of the following form 
\begin{align}
    <\dot{x},\psi>_{V^*,V} + a(\bm{q};x,\psi) &= <\bm{B}(\bm{q})u,\psi>_{V^*,V}, \quad \psi \in V, \nonumber\\
    x(0) &= x_0, \label{eq.APS}\\
    y(t) &= \bm{C}(\bm{q}) x(t) \nonumber,
\end{align}
where $<\cdot,\cdot>_{V^*,V}$ denotes the duality pairing between $V^*$ and $V$, $Q$ is as defined in Section \ref{intro}, and for each $\bm{q} \in Q$, $a(\bm{q};.,.) : V \times V \to \mathbb{C}$ is a sesquilinear form satisfying the following three assumptions,
\begin{itemize}[leftmargin=0.25in]
    \item[\textbf{B1.}] \textbf{(Boundedness)} There exists a constant $\alpha_0$ such that for all $\psi_1,\psi_2 \in V$, we have
    \begin{align*}
        |a(\bm{q};\psi_1,\psi_2)| \leq \alpha_0 \parallel \psi_1 \parallel_V \parallel \psi_2 \parallel_V.
    \end{align*}
    \item[\textbf{B2.}] \textbf{(Coercivity)} There exists $\lambda_0 \in \mathbb{R}$ and $\mu_0>0$ such that for all $\psi \in V$, we have
    \begin{align*}
        a(\bm{q};\psi,\psi) + \lambda_0 |\psi|_H^2 \geq \mu_0 \parallel \psi \parallel_V^2.
    \end{align*}
    \item[\textbf{B3.}] \textbf{(Continuity)} For all $\psi_1,\psi_2 \in V$ and $\bm{q}, \tilde{\bm{q}} \in Q$, we have $$|a(\bm{q},\psi_1,\psi_2)-a(\tilde{\bm{q}},\psi_1,\psi_2)| \leq d(\bm{q},\tilde{\bm{q}}) \parallel \psi_1 \parallel_V \parallel \psi_2 \parallel_V.$$
\end{itemize}
In these assumptions, $\parallel.\parallel_V$ and $|.|_H$ denotes the norm on the spaces $V$ and $H$, respectively. Further, in (\ref{eq.APS}), $\bm{B}(\bm{q}): \mathbb{R}^{\nu} \to V^*$, and $\bm{C}(\bm{q}): V \to \mathbb{R}$ are bounded linear operators with initial conditions $x_0 \in H$, input $u \in L^2([0,T],\mathbb{R}^{\nu})$, and output $y \in L^2([0,T],\mathbb{R})$.

It can be shown that the system in (\ref{eq.APS}) has a unique solution in $$\big{\{} \psi \enskip | \enskip \psi \in L^2([0,T],V), \enskip \dot{\psi} \in L^2([0,T],V^*) \big{\}} \subset C([0,T],H)$$ using standard variational arguments (such as in \cite{Lions:1971}). However, we use a linear semigroup approach to convert the system in (\ref{eq.APS}) into a discrete-time state space model and then use arguments from linear semigroup theory \cite{Banks:1989,Pazy:1983} to argue convergence of finite dimensional Galerkin-based approximations and conclude that assumptions \textbf{A1}-\textbf{A3} are satisfied.

Assumptions \textbf{B1} and \textbf{B2} yield that the form $a(\bm{q};.,.)$ defines a bounded linear operator $\bm{A}(\bm{q}):V \to V^*$ given by
\begin{align*}
    <\bm{A}(\bm{q})\psi_1,\psi_2>_{V^*,V} = -a(\bm{q};\psi_1,\psi_2),
\end{align*}
where  for $\psi_1, \psi_2 \in V$. If we restrict the operator $\bm{A}(\bm{q})$ to the subspace $Dom(\bm{A}(\bm{q})) = \{\psi \in V \enskip | \enskip \bm{A}(\bm{q})\psi \in H\}$, it becomes the infinitesimal generator of a holomorphic or analytic semigroup, $\{e^{\bm{A}(\bm{q})t} \enskip | \enskip t \geq 0\}$, of bounded linear operators on $H$. The operator $\bm{A}(\bm{q})$ is referred to as being regularly dissipative \cite{Banks:1997,Banks:1989,Tanabe:1979}. Moreover, this semigroup can be extended and restricted to be a holomorphic semigroup on $V^*$ and $V$, respectively, as well \cite{Banks:1997,Tanabe:1979}.

The system in (\ref{eq.APS}) can now be written in state space form with time invariant operators $\bm{A}(\bm{q})$, $\bm{B}(\bm{q})$, and $\bm{C}(\bm{q})$, as 
\begin{align}
    \dot{x}(t) &= \bm{A}(\bm{q})x(t) + \bm{B}(\bm{q})u(t),\nonumber\\
    x(0) & = x_0, \label{eq.statespace}\\
    y(t) &= \bm{C}(\bm{q}) x(t) \nonumber.
\end{align}
The operator form of the variation of constants formula, then yields what is known as a mild solution of (\ref{eq.statespace}), and it is given by
\begin{align}
    x(t;\bm{q}) &= e^{\bm{A}(\bm{q}) t} x_0 + \int_0^t e^{\bm{A}(\bm{q})(t-s)} \bm{B}(\bm{q}) u(s) ds, \enskip t \geq 0, \label{eq.mildsol}\\
    y(t;\bm{q}) &= \bm{C}(\bm{q}) x(t;\bm{q}) \nonumber.
\end{align}

To obtain the corresponding discrete or sampled time form of the system in (\ref{eq.statespace}), let $\tau > 0$ be the length of the sampling interval, and consider strictly zero-order hold inputs of the form $u(t) = u_{k-1}, \enskip t \in [(k-1)\tau,k\tau), \enskip k = 1,2,\dots$. Then, let $ x_k = x(k\tau)$ and $y_k = y(k\tau), \enskip k = 1,2,\dots$. By applying (\ref{eq.mildsol}) on each sub-interval $[(k-1)\tau,k\tau]$, \enskip $k=1,2,\dots$, we obtain the discrete-time dynamical system given by
\begin{align}
    x_{k} &= \hat{\bm{A}}(\bm{q})x_{k-1} + \hat{\bm{B}}(\bm{q})u_{k-1}, \enskip k=1,2,\dots,\label{eq.x_k}\\
    y_{k} &= \hat{\bm{C}}(\bm{q}) x_k, \enskip k=1,2,\dots, \label{eq.y_k}
\end{align}
where $x_0 \in V$, $\hat{\bm{A}}(\bm{q}) = e^{\bm{A}(\bm{q}) \tau}$, $\hat{\bm{B}}(\bm{q}) = \int_0^\tau e^{\bm{A}(\bm{q}) s} \bm{B}(\bm{q}) ds$, and $\hat{\bm{C}}(\bm{q}) = \bm{C}(\bm{q})$.

Using a standard Galerkin approach \cite{Banks:1984}, we can approximate the discrete-time system given in (\ref{eq.x_k})-(\ref{eq.y_k}) by a sequence of approximating finite dimensional discrete-time systems in a sequence of finite dimensional subspaces, $V^N$, of $V$. In order to argue convergence, we will require the following additional assumption concerning the subspaces $V^N$,
\begin{itemize}[leftmargin=0.25in]
    \item[\textbf{C1.}] \textbf{(Approximation)} For every $x \in V$, there exists $x^N \in V^N$ such that $\parallel x-x^N \parallel_V \to 0$ as $N \to \infty$. 
\end{itemize}
We consider the sequence of approximating finite dimensional discrete-time systems by 
\begin{align*}
    x_{k}^N &= \hat{\bm{A}}^N(\bm{q})x_{k-1}^N + \hat{\bm{B}}^N(\bm{q})u_{k-1}, \enskip k=1,2,\dots,\\
    y_{k}^N &= \hat{\bm{C}}^N(\bm{q}) x_k^N, \enskip k=1,2,\dots,
\end{align*}
where $\hat{\bm{A}}^N(\bm{q}) = e^{\bm{A}^N(\bm{q}) \tau}$, $\hat{\bm{B}}^N(\bm{q}) = \int_0^\tau e^{\bm{A}^N(\bm{q}) s} \bm{B}^N(\bm{q}) ds$, and $\hat{\bm{C}}^N(\bm{q}) x_k = \hat{\bm{C}}(\bm{q})$, where for each $\bm{q} \in Q$, $\bm{A}^N(\bm{q})$ is the linear operator on $V^N$ obtained by restricting the form $a(\bm{q};.,.)$ to $V^N \times V^N$, i.e. for $\psi_1^N,\psi_2^N \in V^N$,
\begin{align*}
    <\bm{A}^N(\bm{q})\psi_1^N,\psi_2^N>_{V^*,V} = -a(\bm{q};\psi_1^N,\psi_2^N).
\end{align*}
And also, $\bm{B}^N(\bm{q}) = \pi^N \bm{B}(\bm{q})$, where in this definition, $\pi^N$ is the natural extension of the orthogonal projection operator $\pi^N: H \to V^N$ to $V^*$ from its dense subspace $H$. We also set $x_0^N = \pi^N x_0\in V^N$.

Under the assumptions \textbf{B1-B3} and \textbf{C1} using the Trotter-Kato approximation theorem from the theory of linear semigroups of operators \cite{Banks:1988,Pazy:1983}, we were able to conclude that $\lim_{N \rightarrow \infty}\parallel x_k^N - x_k\parallel_V = 0$ and $\lim_{N \rightarrow \infty} | y_k^N - y_k | = 0$ for each $x_0 \in V$, and uniformly in $\bm{q}$ for $\bm{q} \in Q$ and $k \in \{1,2,\dots,K\}$, for any fixed $K \in \mathbb{N}^+$.

We can now use the results described in the previous paragraphs to show that an abstract parabolic system satisfies assumptions \textbf{A1-A3} given in Section \ref{sec:5}. To show that the assumption \textbf{A1} is satisfied, we need to show that for all $\bm{n}$, $m$, and $N$, the map $P \mapsto \mathcal{L}_{\bm{n},m}^N(P;\pmb{\mathscr{\hat{y}}})$ is a continuous map. It suffices to show that for any fixed $\bm{n}$, $m$, and $N$, and for any sequence of probability measures $P_M$, such that $\rho(P_M,P) \to 0$ in $\mathcal{P}(Q)$, we have $\mathcal{L}_{\bm{n},m}^N(P_M;\pmb{\mathscr{\hat{y}}}) \to \mathcal{L}_{\bm{n},m}^N(P;\pmb{\mathscr{\hat{y}}})$ as $M \to \infty$. Towards this end, we see that
\begin{align*}
   \bigg{|} \mathcal{L}_{\bm{n},m}^N(P_M;\pmb{\mathscr{\hat{y}}}) - \mathcal{L}_{\bm{n},m}^N(P;\pmb{\mathscr{\hat{y}}}) \bigg{|} &= \Bigg{|} \prod_{i=1}^m \mathcal{L}_{i}^N(P_M;\pmb{\mathscr{\hat{y}}}_i) - \prod_{i=1}^m \mathcal{L}_{i}^N(P;\pmb{\mathscr{\hat{y}}}_i) \Bigg{|}\\
   &= \Bigg{|} \prod_{i=1}^m \int_Q \prod_{k=1}^{n_i} \varphi(\mathscr{\hat{y}}_{k,i}-y_{k,i}^N(\bm{q}_i)) dP_M(\bm{q}_i)\\
   & \enskip \enskip \enskip - \prod_{i=1}^m \int_Q \prod_{k=1}^{n_i} \varphi(\mathscr{\hat{y}}_{k,i}-y_{k,i}^N(\bm{q}_i)) dP(\bm{q}_i) \Bigg{|} \\
   & \to 0,
\end{align*}
by definition of the Prohorov metric. It follows that the assumption \textbf{A1} is satisfied.

Next, we show that the assumption \textbf{A2} is satisfied. We have that $\lim_{N \rightarrow \infty}\parallel x_{k,i}^N - x_{k,i}\parallel_V = 0$ and $\lim_{N \rightarrow \infty} | y_{k,i}^N - y_{k,i} | = 0$ for each $x_{0,i} \in V$, uniformly in $\bm{q}_i$ for $\bm{q}_i \in Q$, $k=1,\dots,n_i, \enskip i=1,\dots,m$. We want to show that for any sequence of probability measures $P_M$, such that $\rho(P_M,P) \to 0$ in $\mathcal{P}(Q)$, and for $i=1,\dots,m$, as $N, M \to \infty$, we have $\mathcal{L}_{i}^N(P_M;\pmb{\mathscr{\hat{y}}}_i) \to \mathcal{L}_{i}(P;\pmb{\mathscr{\hat{y}}}_i)$.

Recall that $\varphi$ is assumed to be continuous. Let $\epsilon > 0$, and choose $N_0$ such that for $N \geq N_0$, and for every $M$, $\Big{|} \int_Q \big{(} \prod_{k=1}^{n_i} \varphi(\mathscr{\hat{y}}_{k,i}-y_{k,i}^N(\bm{q}_i)) - \prod_{k=1}^{n_i} \varphi(\mathscr{\hat{y}}_{k,i}-y_{k,i}(\bm{q}_i)) \big{)} dP_M(\bm{q}_i) \Big{|} < \epsilon/2$. Then, we have
\begin{align*}
    & \Big{|} \mathcal{L}_{i}^N(P_M;\pmb{\mathscr{\hat{y}}}_i) - \mathcal{L}_{i}(P;\pmb{\mathscr{\hat{y}}}_i) \Big{|}\\
    & = \Big{|} \int_Q \prod_{k=1}^{n_i} \varphi(\mathscr{\hat{y}}_{k,i}-y_{k,i}^N(\bm{q}_i)) dP_M(\bm{q}_i) - \int_Q \prod_{k=1}^{n_i} \varphi(\mathscr{\hat{y}}_{k,i}-y_{k,i}(\bm{q}_i)) dP(\bm{q}_i) \Big{|} \\
    & \leq \Big{|} \int_Q \big{(} \prod_{k=1}^{n_i} \varphi(\mathscr{\hat{y}}_{k,i}-y_{k,i}^N(\bm{q}_i)) - \prod_{k=1}^{n_i} \varphi(\mathscr{\hat{y}}_{k,i}-y_{k,i}(\bm{q}_i)) \big{)} dP_M(\bm{q}_i) \Big{|} \\
    & \enskip + \Big{|} \int_Q \prod_{k=1}^{n_i} \varphi(\mathscr{\hat{y}}_{k,i}-y_{k,i}(\bm{q}_i)) dP_M(\bm{q}_i) - \int_Q \prod_{k=1}^{n_i} \varphi(\mathscr{\hat{y}}_{k,i}-y_{k,i}(\bm{q}_i)) dP(\bm{q}_i) \Big{|}\\
    & < \frac{\epsilon}{2} + \frac{\epsilon}{2} = \epsilon,
\end{align*}
where the second term is less than $\epsilon/2$ by definition of the Prohorov metric. Consequently, the assumption \textbf{A2} is satisfied.

Finally, we want to show that the assumption \textbf{A3} is satisfied. We want to show that for all $P \in \mathcal{P}(Q)$ and for $i=1,\dots,m$, $\mathcal{L}_{i}^N(P_M;\pmb{\mathscr{\hat{y}}}_i)$ and $\mathcal{L}_{i}(P;\pmb{\mathscr{\hat{y}}}_i)$ are uniformly bounded. Recall that the parameter space $Q$ is compact. Thus, for $\bm{q}_i \in Q$, and for each $N$, $y_{k,i}^N(\bm{q}_i)$ are uniformly bounded. Similarly, $y_{k,i}(\bm{q}_i)$ are also uniformly bounded and we also have that $|y_{k,i}^N(\bm{q}_i)-y_{k,i}(\bm{q}_i)| \to 0$ uniformly in $\bm{q}_i$ for $\bm{q}_i \in Q$. Therefore, we can conclude that the assumption \textbf{A3} is satisfied.


\section{Application to the Transdermal Transport of Alcohol}
\label{sec:7}

To apply the results established in Section \ref{sec:6} to the system (\ref{eq.1})-(\ref{eq.5}) in Section \ref{intro}, the system must first be written in weak form. Then, the parameter space $Q$, the Hilbert spaces $H$ and $V$, the sesquilinear form $a(\bm{q};.,.)$, and the operators $\bm{B}(\bm{q})$ and $\bm{C}(\bm{q})$ must all be identified. Also, the approximating subspaces, $V^N$, must be chosen, and finally assumptions \textbf{B1-B3} and \textbf{C1} must all be shown to be satisfied.

The parameter space $Q$ is assumed to be a compact subset of $\mathbb{R}^+ \times \mathbb{R}^+$ with any $p$-metric denoted by $d_Q$. Let $H=L^2(0,1)$ and $V=H^1(0,1)$ with their standard inner products and norms. It follows that $V^*=H^{-1}(0,1)$, and the three spaces $H$, $V$, and $V^*$ form a Gelfand triple. To rewrite the system (\ref{eq.1})-(\ref{eq.5}) in weak form, we multiply by a test function $\psi \in V$ and integrate by parts to obtain
\begin{align*}
    <\dot{x}(t),\psi>_{V^*,V} + \int_0^1 q_1 \frac{\partial x}{\partial \eta}(t,\eta) \psi'(\eta) d\eta + x(t,0) \psi(t,0) = q_2 u(t) \psi(1),
\end{align*}
where $<\cdot,\cdot>_{V^*,V}$ denotes the duality pairing between $V^*$ and $V$. Then for $\bm{q} \in Q$, $u \in \mathbb{R}$, and $\tilde{\psi}, \psi \in V$, we set
\begin{align*}
    a(\bm{q};\tilde{\psi},\psi) &= \int_0^1 q_1 \tilde{\psi}'(\eta) \psi'(\eta) d\eta + \tilde{\psi}(0) \psi(0),\\
    <\bm{B}(\bm{q})u,\psi>_{V^*,V} &= q_2 u\psi(1),\\
    \bm{C}(\bm{q})\psi&=\bm{C} \psi =\psi(0).
\end{align*}
We can establish that assumptions \textbf{B1-B3} are satisfied using arguments involving the Sobolev Embedding Theorem (see \cite{Adams:2003}). Also, the operators $\bm{B}(\bm{q})$ and $\bm{C}(\bm{q})$ are continuous in the uniform operator topology with respect to $\bm{q} \in Q$. It follows from Section \ref{sec:6} that
\begin{align*}
    &g_{k-1}(x_{k-1,i}(\bm{q}_i),u_{k-1,i};\bm{q}_i) = \hat{\bm{A}}(\bm{q}_i)x_{k-1,i}(\bm{q}_i) + \hat{\bm{B}}(\bm{q}_i)u_{k-1,i}, \enskip k=1,2,\dots,n_i, \enskip i=1,\dots,m,\\
    &h_k(x_{k,i}(\bm{q}_i), \phi_{0,i}, u_{k,i}; \bm{q}_i) = \hat{\bm{C}}(\bm{q}_i) x_{k,i}(\bm{q}_i), \enskip k = 1,\dots,n_i, \enskip i = 1,\dots,m,
\end{align*}
where $\hat{\bm{A}}(\bm{q}) = e^{\bm{A}(\bm{q}) \tau}$, $\hat{\bm{B}}(\bm{q}) = \int_0^\tau e^{\bm{A}(\bm{q}) s} \bm{B}(\bm{q}) ds$, and $\hat{\bm{C}}(\bm{q}) = \bm{C}(\bm{q})$ with $\tau>0$ the length of the sampling interval.

Let $V^N, \enskip N = 1, 2, \dots$, be the span of the standard linear splines defined with respect to the uniform mesh $\{0,1/N,2/N,\dots,(N-1)/N,1\}$ on $[0,1]$. Then, assumption \textbf{C1} is satisfied by standard arguments for spline functions (see, for example, \cite{Schultz:1973}). If for each $i=1,2,\dots,m$, we define $x_{k,i}^N$ and $y_{k,i}^N$ as in (\ref{eq.x_k})-(\ref{eq.y_k}), then by the arguments at the end of Section \ref{sec:6}, we conclude that assumptions \textbf{A1-A3} are satisfied.

In the following two subsections, \ref{sec:7.1} and \ref{sec:7.2}, we present the application of our scheme to the transdermal transport of alcohol in two examples, one involving simulated data, and the other using actual Human subject data collected in the Luczak laboratory at USC. For the simulated data, we want to show the convergence of the estimated distribution of the parameter vector $\bm{q}=(q_1,q_2)$ to the ``true" distribution as the number of drinking episodes increases, as the number of nodes increases, and as the level of discretization in the finite dimensional approximations increases. And, for the actual data, we apply the leave-one-out cross-validation (LOOCV) method by estimating the distribution of the parameter vector $\bm{q}$ using the training set, and then estimating the TAC output using the estimated distribution and the BrAC input of the test set.

\subsection{Example 1: Estimation Based on Simulation Data}
\label{sec:7.1}
In this example, we estimate the distribution of the parameter vector $\bm{q}=(q_1,q_2)$ in the system (\ref{eq.1})-(\ref{eq.5}) by first simulating TAC data in MATLAB with the assumption that the two parameters $q_1$ and $q_2$ are i.i.d. with a Beta distribution, $q_1,q_2 \sim Beta(2,5)$. Thus, their joint cumulative distribution function (cdf) is the product of their marginal $Beta(2,5)$ cdfs.

From equation (\ref{eq.MixedEffectsModel}), we have
\begin{align*}
    Y_{k,i} = y_{k,i}(\bm{q}_i) + e_{k,i}, \enskip k = 1,\dots,n_i, \enskip i = 1,\dots,m,
\end{align*}
where $m$ is the number of drinking episodes, and $y_{k,i}(\bm{q}_i)$ is the observed TAC for the $i^{th}$ drinking episode at time step $k$. We let $P_0$ be the product of the cdfs of two independent $Beta(2,5)$ distributions, and $e_{k,i} \sim N(0,10^{-6}), \enskip k = 1,\dots,n_i, \enskip i = 1,\dots,m$.

To approximate the PDE model for the TAC observations, we used the linear spline-based Galerkin approximation scheme described in Section \ref{sec:6} with $N$ equally spaced sub-intervals from $[0,1]$ (see \cite{Sirlanci:2019A,Sirlanci:2017,Sirlanci:2019B}). We want to compute $\hat{P}_{\bm{n},m,M}^N$ given by equation (\ref{eq.logsumexp}), where $\bm{q}_j = (q_{j_1},q_{j_2})$ is chosen as $M$ uniform meshgrid coordinates on $[0,1] \times [0,1]$. We make the assumption that there is no alcohol in the epidermal layer of the skin at time $t=0$, so we let $\phi_{0,i}^N = 0$. The constrained optimization problem over Euclidean $M$-space was solved using constrained optimization routine FMINCON from the Optimization Toolbox in MATLAB applied to the negative of the log-likelihood function.

In our earlier treatment \cite{Asserian:2021} in which the assumed observation was aggregated TAC, the appropriate underlying statistical model was the naive pooled model (i.e. the data point for each drinking episode at a certain time is an observation of the mean behavior plus a random error). When the nonlinear least squares-based constrained optimization problem was solved, the inherent ill-posedness of the inverse problem resulted in undesirable oscillations. To mitigate this behavior, we included an appropriately weighted regularization term in the performance index being minimized. One advantage of the mixed effects statistical model presented here is that regularization was not required.

In order to demonstrate the consistency of our estimator, we show that as $m$, the number of drinking episodes, increases, the estimated cdf of the parameter vector $\bm{q}=(q_1,q_2)$ approaches the ``true" cdf, the product of two $Beta(2,5)$ cdfs. In order to simulate realistic longitudinal TAC vectors representing data that might be collected by the TAC biosensor for an individual's drinking episode, we used BrAC data collected in the Luczak laboratory as the input to the model, and generated random samples of $q_1$ and $q_2$, i.i.d. $Beta(2,5)$ in MATLAB. Using the algorithm developed in the current paper, we estimated the distribution of the random parameter vector by solving the optimization problem for different cases based on the number of drinking episodes, $m \in \{1,3,7,9,16,42\}$, and observed the convergence of the estimated distribution to the ``true" distribution as $m$ increases.

To quantify this, let $D$ be the sum of the squared differences at each node between the estimated and the ``true" distribution, the product of two $Beta(2,5)$ cdfs. Let $p_j$ and $b_j$ be the weights at the node $\bm{q}_j$ of the estimated and ``true" distribution, respectively. Then,
\begin{align*}
    D = \sum_{j=1}^{M} (p_j - b_j)^2.
\end{align*}

We fixed the number of nodes, $M$, and the level of discretization, $N$, sufficiently large. We set $M=400$ and $N=128$. We estimated the distribution of the parameter vector for different cases based on different numbers of drinking episodes, $m \in \{1,3,7,9,16,42\}$, and calculated $D$ for each case. Our results are summarized in Table (\ref{tab:1}). We observed that as the number of drinking episodes, $m$, increases, the sum of the squared differences at each node between the estimated distribution and the ``true" distribution, $D$, decreases.
\begin{table}[H]
\centering
    \caption{Decrease in $D$, the sum of the squared differences at each node between the estimated and the ``true" distribution, with increasing $m$, the number of drinking episodes, for fixed values of the number of nodes $M=400$ and the level of discretization $N=128$.}
    \label{tab:1}
    \begin{tabular}{ll}
    \hline\noalign{\smallskip}
    {$m$} & {$D$} \\
    \noalign{\smallskip}\hline\noalign{\smallskip}
    1 & 39.0164 \\
    3 & 28.3091 \\
    7 & 8.3247 \\
    9 & 7.1750 \\
    16 & 3.5697 \\
    42 & 3.0337 \\
    \noalign{\smallskip}\hline
    \end{tabular}
\end{table}
In Figure (\ref{fig:3DSim}), we have plotted three different views of the estimated distribution and the ``true" distribution, (again, the product of two $Beta(2,5)$ cdfs), for the different cases where $m=7$, $m=16$, and $m=42$ drinking episodes in the top, middle, and bottom rows, respectively, with the number of nodes set to $M=400$ and the level of discretization to $N=128$. We observe that as $m$ increases, our estimated distribution gets ``closer" to the ``true" distribution, which agrees with the numerical results that are shown in Table (\ref{tab:1}).
\begin{figure}[H]
\centering
\includegraphics[width=10.55cm]{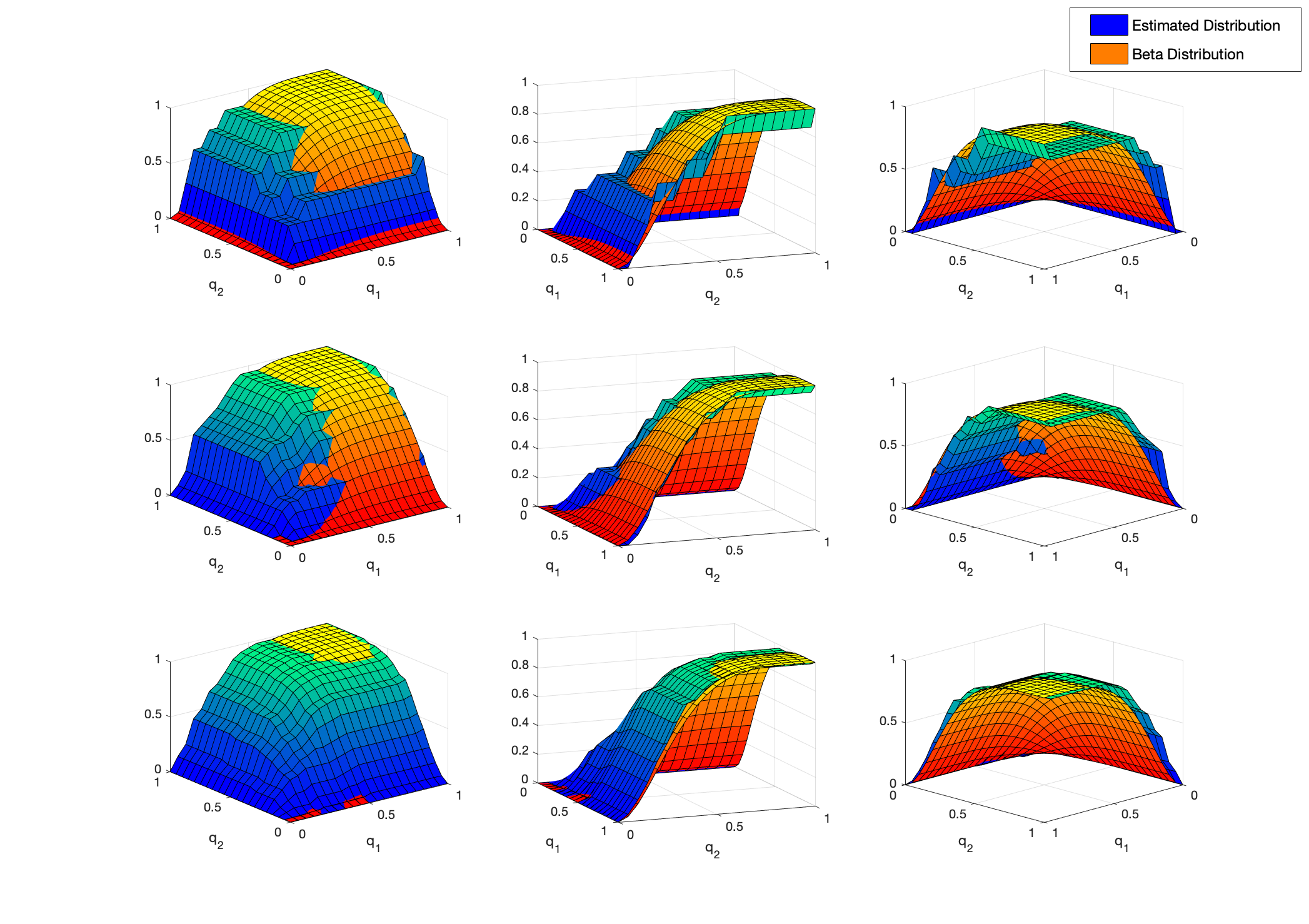}
\caption{Three different views of the estimated distribution and the ``true" joint $Beta(2,5)$ distribution, for different numbers of drinking episodes $m=7$, $m=16$, and $m=42$ in the top, middle, and bottom rows, respectively, for a fixed number of nodes $M=400$ and level of discretization $N=128$.}
\label{fig:3DSim}
\end{figure}

Next, we show that as the number of nodes $M$ and the level of discretization $N$ increases, the normalized sum of squared differences at each node between the estimated and the ``true" distribution decreases. First, we fixed the level of discretization at $N = 128$, and we increased the number of nodes $M$. Let
\begin{align*}
    \bar{D}_M = \frac{1}{M} \sum_{j=1}^{M} (p_j - b_j)^2.
\end{align*}
In Table (\ref{tab:2}), we observe that for the fixed value of $N = 128$, as the number of nodes, $M$, increases, the normalized sum of the squared differences at each node between the estimated distribution and the ``true" distribution, $\bar{D}_M$, decreases.
\begin{table}[H]
\centering
    \caption{Decrease in $\bar{D}_M$, the normalized sum of the squared differences at each node between the estimated distribution and the ``true" distribution, with increasing $M$, the number of nodes, for a fixed value for the level of discretization, $N=128$.}
    \label{tab:2}
    \begin{tabular}{ll}
    \hline\noalign{\smallskip}
    {$M$} & {$\bar{D}_M$} \\
    \noalign{\smallskip}\hline\noalign{\smallskip}
    25 & 0.03795 \\
    100 & 0.03025 \\
    225 & 0.02974 \\
    400 & 0.02081 \\
    \noalign{\smallskip}\hline
    \end{tabular}
\end{table}

Next, we fixed the number of nodes at $M = 400$, and we increased the level of discretization $N$. Let
\begin{align*}
    \bar{D}_N = \frac{1}{N} \sum_{j=1}^{M} (p_j - b_j)^2.
\end{align*}

In Table (\ref{tab:3}), we observe that for $N = 128$ fixed, as the number of nodes, $M$, increases, the normalized sum of the squared differences at each node between the estimated distribution and the ``true" distribution, $\bar{D}_N$, decreases.
\begin{table}[H]
\centering
    \caption{Decrease in $\bar{D}_N$, the normalized sum of the squared differences at each node between the estimated distribution and the ``true" distribution, with increasing $N$, the level of discretization, with the number of nodes fixed at $M=400$.}
    \label{tab:3}
    \begin{tabular}{ll}
    \hline\noalign{\smallskip}
    {$N$} & {$\bar{D}_N$} \\
    \noalign{\smallskip}\hline\noalign{\smallskip}
    4 & 1.22783 \\
    16 & 0.65644 \\
    64 & 0.18565 \\
    128 & 0.06504 \\
    \noalign{\smallskip}\hline
    \end{tabular}
\end{table}

The choice of the ``true" distribution for $q_1$ and $q_2$ in the simulation case is the scatterplot of samples for a set of 18 drinking episodes including BrAC and TAC measurements of different individuals obtained by a deterministic approach in \cite{Banks:2018}. However, the $Beta(2,5)$ distribution was chosen strictly for the purpose of demonstration. When applying our algorithm to actual clinic or lab collected Human subject data, an advantage of our nonparametric approach is that we do not need to make any assumptions about the family of feasible distributions for the parameter vector. In addition, the independent and identically distributed assumption was also very simplistic given that $q_1$ and $q_2$ parameters depend on the same individual and environmental conditions at the time of measurements. This assumption is also relaxed in the flexible nonparametric approach used in the next example. 

\subsection{Example 2: Estimation Based on Actual Human Subject Data}
\label{sec:7.2}

The two datasets used in this example were obtained by two different alcohol biosensors; SCRAM CAM$\textsuperscript{\textregistered}$ and WrisTAS$^{\text{TM}}$7 \cite{Luczak:2015,Saldich:2020}. We fixed the number of nodes at $M=400$ and the level of discretization at $N=128$, both sufficiently large with respect to convergence as we observed in our simulation data examples. From each dataset, we chose $m = 9$ different drinking episodes. We split the drinking episodes into a training set consisting of $8$ drinking episodes, and a testing set consisting of $1$ drinking episode. This way, we could apply the leave-one-out cross-validation (LOOCV) method. We repeated this partitioning process $9$ times, each time leaving out a different drinking episode. Using the training set, we first estimated the distribution of the parameter vector $\bm{q} = (q_1,q_2)$. Next, we sampled $100$ parameter vectors $\bm{q}=(q_1,q_2)$ from the estimated distribution, and using those along with the BrAC input from the testing dataset, we simulated $100$ TAC longitudinal signals. From these $100$ simulated TAC signals, we estimated the ``true" TAC by computing the mean at each time, and we provided what we refer to as a $95\%$ conservative error band, or simply as a $95\%$ error band, by taking the 2.5 and 97.5 percentiles. This approach for the error band is also used for a number of statistics associated with the TAC curve that are of particular interest to researchers and clinicians working in the area of alcohol use disorder. 

For the first example, we considered the dataset collected using the SCRAM alcohol biosensor. Prior to applying the leave-one-out cross-validation (LOOCV) method, in order to visualize the estimated density and distribution of $\bm{q} = (q_1, q_2)$, and the marginal densities of $q_1$ and $q_2$, we trained the algorithm on all 9 drinking episodes. Figure (\ref{fig:4plots}) illustrates the four aforementioned plots. In the estimated density plot, we can see that our numerical result for this example is in agreement with the theoretical result in Lindsay and Mallet \cite{Lindsay:1983,Mallet:1986}, which states that the maximum likelihood estimator $\hat{P}_{\bm{n},m}$ can be found in the class of discrete distributions with at most $m$ support points, i.e. $\hat{P}_{\bm{n},m} \in \mathcal{P}_M(Q)$, where $M \leq m$. In this example, since we had 9 drinking episodes, the estimated density plot displays the support points among 400 nodes for $\bm{q} = (q_1, q_2)$.


In addition, for this sample, the sample mean of $\bm{q} = (q_1, q_2)$ is calculated to be $$\bar{\bm{q}} = (0.6003, 1.2452),$$ the sample covariance matrix is calculated to be
$$\bm{S}_{\bm{q}} = \begin{pmatrix}
    0.0706 & -0.0264 \\
    -0.0264 & 0.0483
\end{pmatrix},$$
and the sample correlation is calculated to be $\rho_{\bm{q}} = -0.4519$. Based on this, we observe that for our training population consisting of 9 drinking episodes there is a moderate negative association between the parameters $q_1$ and $q_2$.

\begin{figure}[H]
\centering
\includegraphics[width=12.2cm]{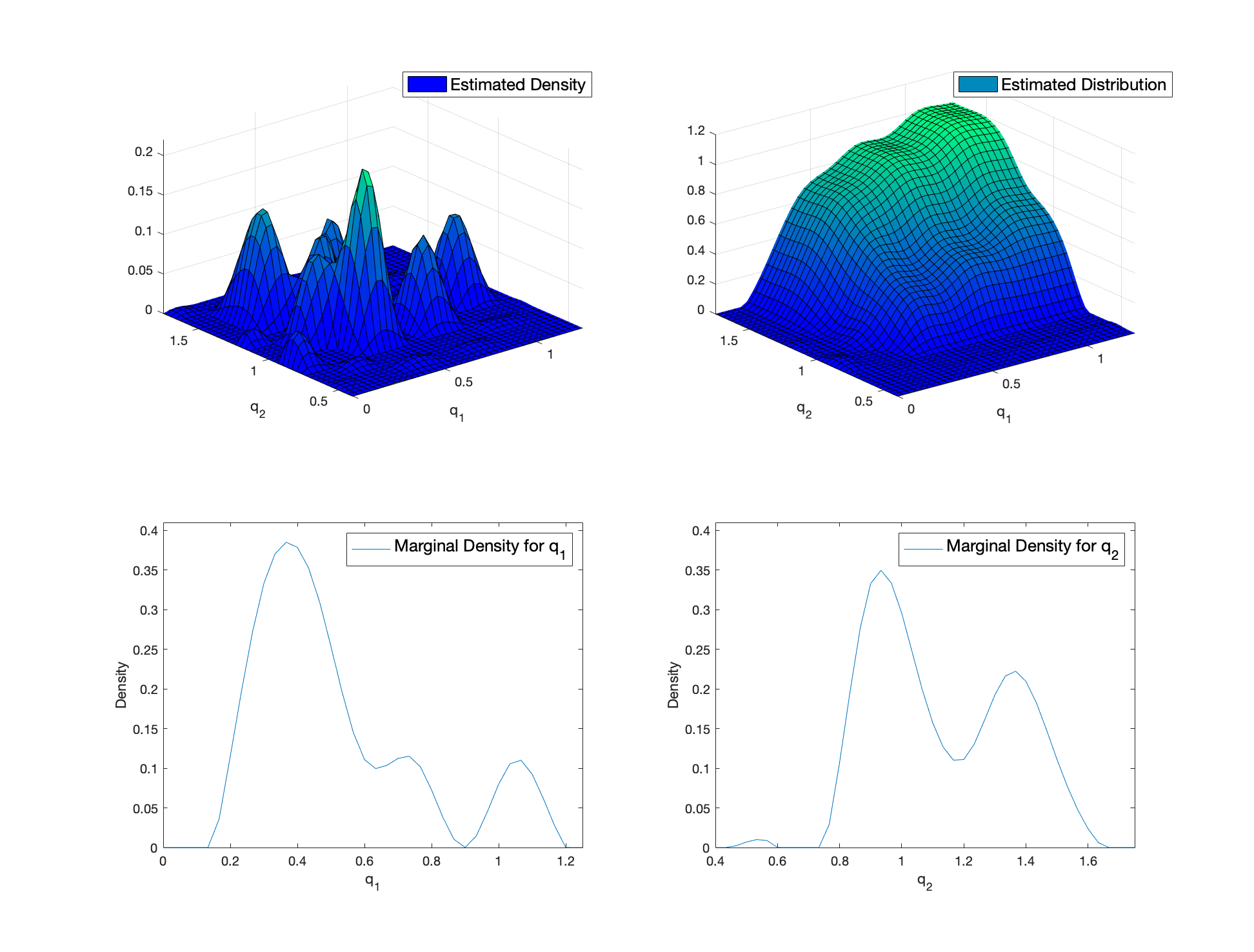}
\caption{The estimated density (top left), estimated distribution (top right), marginal density of $q_1$ (bottom left), and marginal density of $q_2$ (bottom right) obtained from $m=9$ drinking episodes collected using the SCRAM Alcohol Biosensor, for a fixed number of nodes $M=400$ and level of discretization $N=128$.}
\label{fig:4plots}
\end{figure}
We applied the leave-one-out cross-validation (LOOCV) method as explained above to the 9 drinking episodes from the SCRAM biosensor. Figure (\ref{fig:subplots_SCRAM}) shows the measured TAC (i.e. measured by the SCRAM alcohol biosensor) and the estimated TAC (i.e. obtained from our algorithm) for all the 9 drinking episodes left out in the testing set in the partitioning process, and the conservative $95 \%$ error band for a fixed number of nodes $M=400$ and level of discretization $N=128$.

Alcohol researchers and clinicians are particularly interested in a certain statistics associated with drinking episodes: the maximum or peak value of the TAC curve, the time at which the peak value of the TAC is attained, and the area under the TAC curve. The area under the curve (AUC) is a quantifying measure of exposure to the alcohol that integrates the transdermal alcohol concentration across time. Tables (\ref{tab:4})-(\ref{tab:6}) display these statistics along with the measured (or actual) value obtained by the SCRAM alcohol biosensor as well as the conservative $95\%$ error band for the 9 drinking episodes from the testing set. From these tables, we can observe that the $95\%$ error bands do a reasonably good job of capturing the actual values of these statistics specially for the value of the peak TAC and the area under the curve displayed in Table (\ref{tab:4}) and (\ref{tab:6}). In Figure (\ref{fig:subplots_SCRAM}), we can see that there are minor fluctuations in the measured TAC curve. If we smoothed the measured TAC curve, we would have obtained better results for capturing the time at which the peak value of the TAC is attained displayed in Table (\ref{tab:5}).

\begin{figure}[ht]
\centering
\includegraphics[height= 8.1cm]{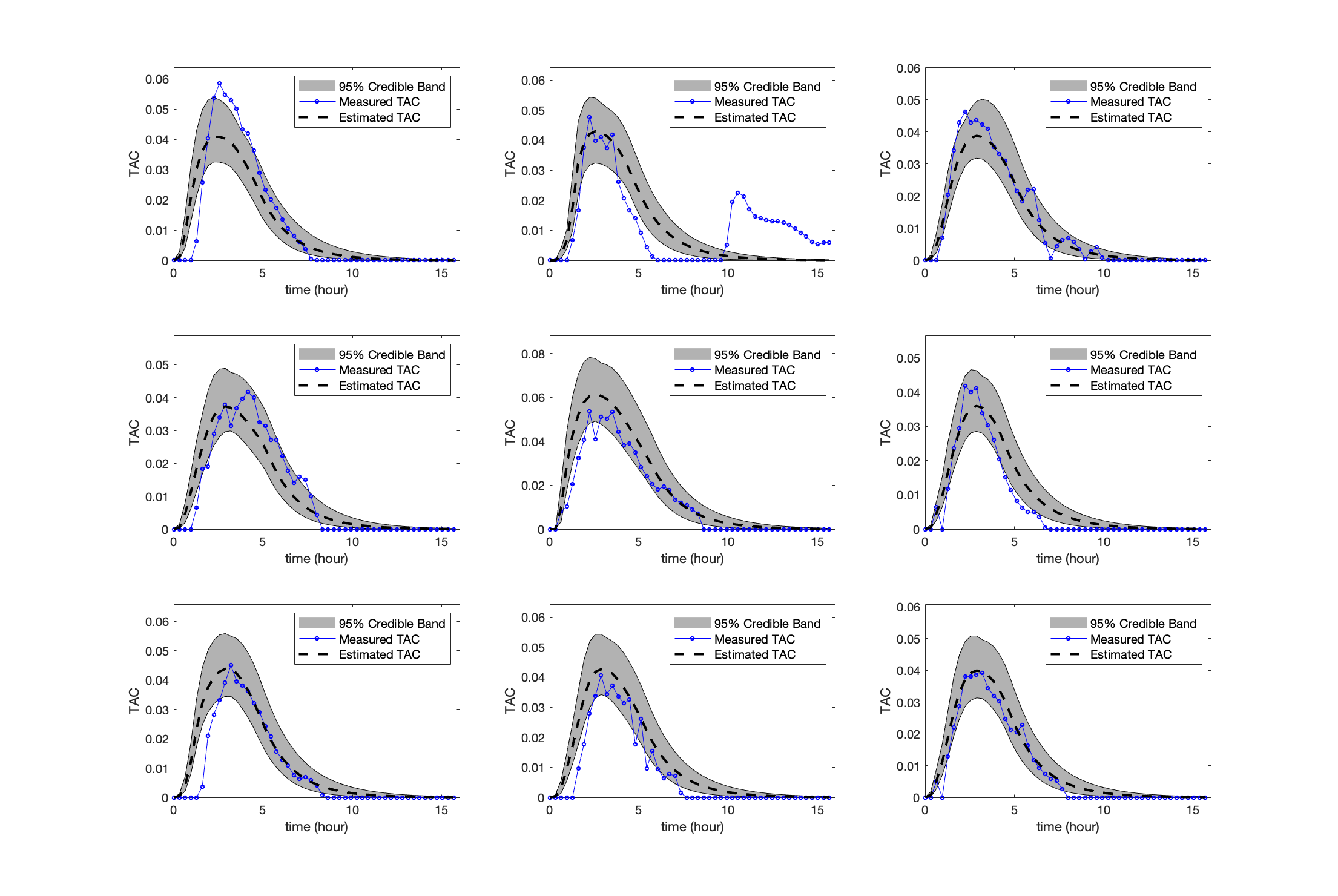}
\caption{The measured TAC, the estimated TAC, and the conservative $95 \%$ error band for 9 drinking episodes from the testing set collected using the SCRAM alcohol biosensor using the LOOCV.}
\label{fig:subplots_SCRAM}
\end{figure}

\begin{table}[ht]
\centering
    \caption{The measured peak TAC, estimated peak TAC, and the $95 \%$ error band for the 9 drinking episodes from the testing set collected using the SCRAM alcohol biosensor.}
    \label{tab:4}
\begin{tabular}{clll}
  \hline\noalign{\smallskip}
Drinking Episode & Measured Peak TAC & Estimated Peak TAC & 95\% Error Band \\ 
  \noalign{\smallskip}\hline\noalign{\smallskip}
  1 & 0.0585 & 0.0413 & (0.0327, 0.0539) \\ 
  2 & 0.0477 & 0.0433 & (0.0324, 0.0543) \\ 
  3 & 0.0464 & 0.0391 & (0.0320, 0.0501) \\ 
  4 & 0.0417 & 0.0375 & (0.0301, 0.0489) \\ 
  5 & 0.0535 & 0.0618 & (0.0497, 0.0781) \\ 
  6 & 0.0419 & 0.0361 & (0.0287, 0.0465) \\ 
  7 & 0.0450 & 0.0441 & (0.0346, 0.0558) \\ 
  8 & 0.0405 & 0.0430 & (0.0345, 0.0542) \\ 
  9 & 0.0391 & 0.0402 & (0.0313, 0.0508) \\ 
  \noalign{\smallskip}\hline
\end{tabular}
\end{table}

\begin{table}[H]
\centering
    \caption{The measured peak time, estimated peak time, and the $95 \%$ error band for the 9 drinking episodes from the testing set collected using the SCRAM alcohol biosensor.}
    \label{tab:5}
\begin{tabular}{clll}
  \hline\noalign{\smallskip}
Drinking Episode & Measured Peak Time & Estimated Peak Time & 95\% Error Band \\ 
  \noalign{\smallskip}\hline\noalign{\smallskip}
  1 & 2.5600 & 2.4480 & (1.9200, 2.8800) \\ 
  2 & 2.2400 & 2.6080 & (2.2400, 2.8800) \\ 
  3 & 2.2400 & 2.8928 & (2.5600, 3.2000) \\ 
  4 & 4.1600 & 2.9056 & (2.5600, 3.2000) \\ 
  5 & 2.2400 & 2.5728 & (2.2400, 2.8800) \\ 
  6 & 2.2400 & 2.8928 & (2.5600, 3.2000) \\ 
  7 & 3.2000 & 2.9696 & (2.5600, 3.2000) \\ 
  8 & 2.8800 & 2.9312 & (2.5600, 3.2000) \\ 
  9 & 3.2000 & 2.9088 & (2.5600, 3.2000) \\ 
  \noalign{\smallskip}\hline
\end{tabular}
\end{table}

\begin{table}[H]
\centering
    \caption{The measured area under the curve (AUC), estimated AUC, and $95 \%$ error band for 9 drinking episodes from the testing set collected using the SCRAM alcohol biosensor.}
    \label{tab:6}
\begin{tabular}{clll}
  \hline\noalign{\smallskip}
Drinking Episode & Measured AUC & Estimated AUC & 95\% Error Band \\ 
  \noalign{\smallskip}\hline\noalign{\smallskip}
  1 & 0.1909 & 0.1784 & (0.1382, 0.2229) \\ 
  2 & 0.1876 & 0.1799 & (0.1343, 0.2321) \\ 
  3 & 0.1868 & 0.1751 & (0.1364, 0.2421) \\ 
  4 & 0.1765 & 0.1735 & (0.1349, 0.2394) \\ 
  5 & 0.2231 & 0.2963 & (0.2203, 0.3911) \\ 
  6 & 0.1151 & 0.1496 & (0.1117, 0.1982) \\ 
  7 & 0.1474 & 0.1912 & (0.1444, 0.2563) \\ 
  8 & 0.1277 & 0.1898 & (0.1412, 0.2505) \\ 
  9 & 0.1493 & 0.1750 & (0.1307, 0.2319) \\ 
  \noalign{\smallskip}\hline
\end{tabular}
\end{table}

For the second example, we applied the leave-one-out cross-validation (LOOCV) method, as explained before, to the 9 drinking episodes from the WrisTAS7 alcohol biosensor. Figure (\ref{fig:subplots_WrisTAS}) shows the measured TAC (i.e. measured by the WrisTAS7 alcohol biosensor) and the estimated TAC (i.e. obtained from our algorithm) for all the 9 drinking episodes left out in the testing set in the partitioning process, and the conservative $95 \%$ error band for a fixed number of nodes $M=400$ and level of discretization $N=128$. We can observe that we obtained similar results as the previous example.
\begin{figure}[H]
\centering
\includegraphics[height= 8.1cm]{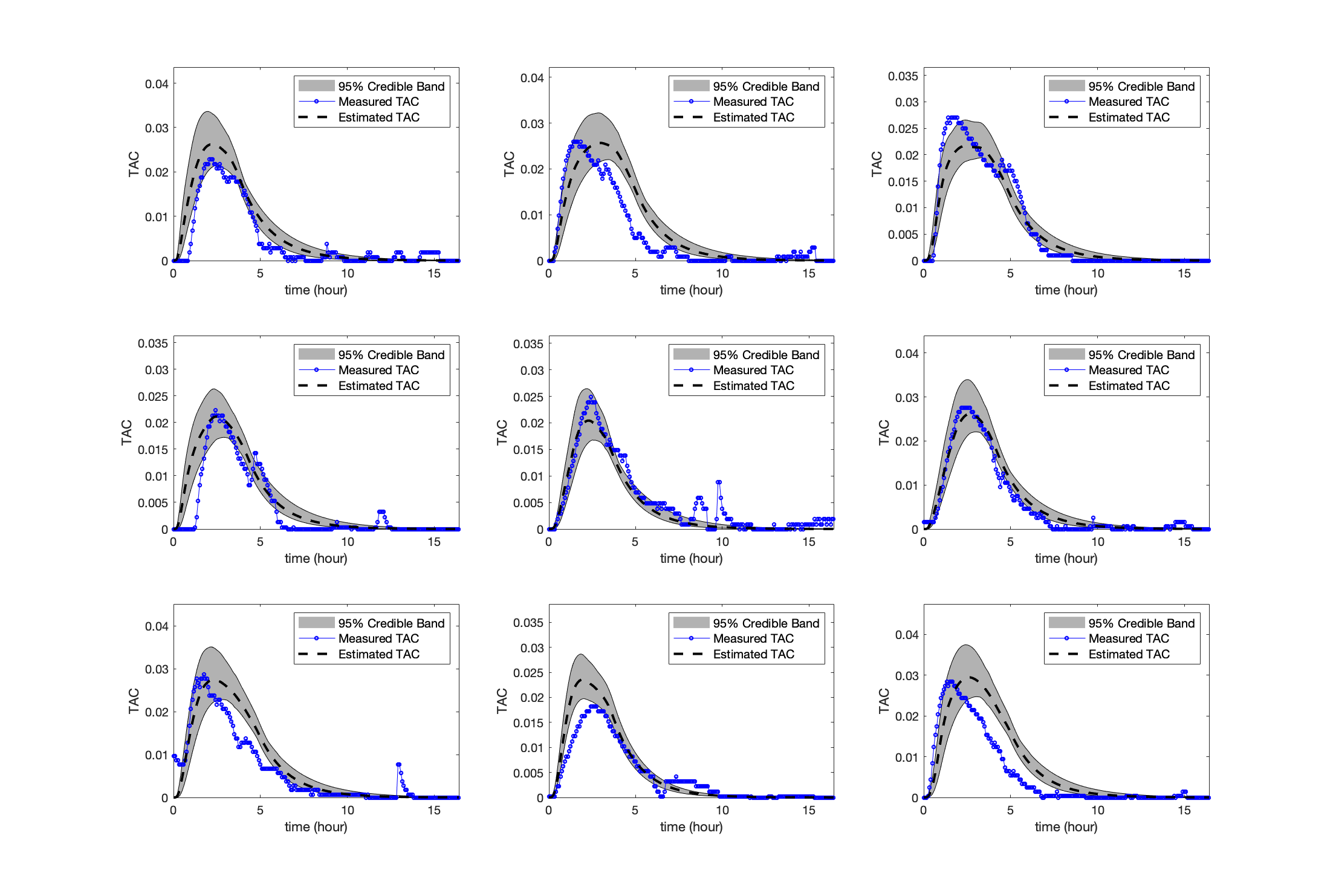}
\caption{The measured TAC, the estimated TAC, and the conservative $95 \%$ error band for 9 drinking episodes from the testing set collected using the WrisTAS7 alcohol biosensor using the LOOCV.}
\label{fig:subplots_WrisTAS}
\end{figure}


\section{Discussion and Concluding Remarks}
\label{sec:8}
In this paper, we considered the nonparametric fitting of a population model for the transdermal transport of alcohol based on a maximum likelihood approach applied to a mixed effects statistical model. In estimating a population model, we were actually estimating the distribution of the model parameters and consequently the MLE problem was formulated as an optimization problem over a space of feasible probability measures endowed with the weak topology induced by the Prohorov metric. 

By using a first principles physics based model in the form of a one dimensional diffusion equation, we were able to capture the essential features of transdermal transport while keeping the dimension of the parameter space low. In this way, we were able to avoid having to introduce regularization so as to mitigate ill-posedness and over-fitting. On the other hand, the fact that the model was infinite dimensional being based on a partial differential equation, computing the MLE necessitated finite dimensional approximation. 

We were able to first theoretically demonstrate the existence and then the consistency of our MLE using a decades old result from the literature. The consistency result is with respect to the uncertainty across subjects. It is likely that the consistency results proved in \cite{Banks:2018} and \cite{Banks:2012}, in the context of a naive pooled statistical model based on a nonlinear least squares estimator, for problems either the same as, or very similar to the one we consider here, would apply for the uncertainty within each subject (i.e. as the resolution of the data with respect to time increases). At present, this is just a hypothesis and a possible avenue for future research; as of yet, we have not carefully examined this possibility. 

In addition, we were able to use linear semigroup theory, in particular the Trotter-Kato Theorem, and the properties of the weak topology and the Prohorov metric on the space of feasible probability measures, to establish a convergence result with respect to the MLE for the finite dimensional approximating estimation problems and the MLE for the estimation problem posed in terms of the original underlying infinite dimensional model. 

We were able to demonstrate the efficacy of our theoretical results numerically first on an example involving simulated data and then on one involving actual human subject data from an NIH funded study. We used our scheme to obtain the joint density and distribution of the parameters as well as estimates and conservative $95\%$ error bands for the TAC signal and a number of TAC related statistics of particular interest to researchers and clinicians who work in the area of alcohol use disorder.

In addition to consistency with respect to the intrinsic uncertainty, other extensions we are currently looking at include the development of a general framework for estimating random parameters in general finite or infinite dimensional, continuous or discrete-time dynamical systems (e.g. ODEs, PDEs, FDEs, DEs, etc.) that would potentially subsume the results presented here as well as in \cite{Banks:2018} and \cite{Banks:2012}. 

Finally, since the actual motivation for this investigation is the development of schemes for converting biosensor measured TAC into BAC/BrAC, the next step would be to examine how well population models, estimated using the approach we have presented here, perform when used as part of a scheme that deconvolves an estimate for BAC/BrAC from the TAC signal. In particular, we are interested in comparing it to the schemes, used for this same purpose, developed and implemented in \cite{Hawekotte:2021} and \cite{Sirlanci:2018}. In addition, we are also interested in examining how our uncertainty quantification scheme for the TAC to BAC/BrAC conversion problem performs when compared to the non-physics based, machine learning inspired schemes developed in \cite{Fairbairn:2021}, \cite{Oszkinat:2021} and \cite{Oszkinat:2021A}.

\begin{acknowledgements}
We thank the Luczak laboratory students and staff members, particularly Emily Saldich, for their assistance with data collection and management for the SCRAM biosensor. We also thank Dr. Tamara Wall for providing the data for the WrisTAS7 biosensor.
\end{acknowledgements}
\section{Declarations}
\textbf{Funding:} This study was funded in part by the National Institute on Alcohol Abuse and Alcoholism (Grant Numbers: R21AA017711 and R01AA026368, S.E.L. and I.G.R.) and by support from the USC Women in Science and Engineering (WiSE) program (L.A.).\\
\textbf{Conflict of Interest:} The authors declare that they have no conflicts of interest.\\
\textbf{Availability of Data and Material:} The data used in this study can be made available upon special request to the authors.\\
\textbf{Code Availability:} The codes used in this study can be made available upon special request to the authors.



\end{document}